\newcommand{\BibTeX}{\rm B\kern-.05em{\sc i\kern-.025em b}\kern-.08em\TeX}
\tikzset{snake it/.style={decorate, decoration=snake}}
\let\oldnl\nl
\newcommand{\nonl}{\renewcommand{\nl}{\let\nl\oldnl}}
\newtheorem{theorem}{Theorem}
\newtheorem{corollary}{Corollary}
\newtheorem{claim}{Claim}
\Crefname{claim}{Claim}{Claims}
\Crefname{corollary}{Corollary}{Corollaries}
\Crefname{definition}{Definition}{Definitions}
\Crefname{example}{Example}{Examples}
\Crefname{lemma}{Lemma}{Lemmas}
\Crefname{property}{Property}{Properties}
\Crefname{proposition}{Proposition}{Propositions}
\Crefname{remark}{Remark}{Remarks}
\Crefname{theorem}{Theorem}{Theorems}
\theoremstyle{remark}
\newtheorem{remark}{Remark}
\title{Best of Both Worlds Guarantees for Equitable Allocations}
\author{
	\begin{tabular}{m{0.12\textwidth}m{0.12\textwidth}m{0.12\textwidth}m{0.12\textwidth}
 }
    \multicolumn{2}{c}{\textbf{Umang Bhaskar}} & \multicolumn{2}{c}{\textbf{Vishwa Prakash HV}}
        \\
        \multicolumn{2}{c}{Tata Institute of Fundamental Research} & \multicolumn{2}{c}{Chennai Mathematical Institute} 
        \\ 
		    \multicolumn{2}{c}{\href{mailto:umang@tifr.res.in}{\small{\texttt{umang@tifr.res.in}}}} &\multicolumn{2}{c}{\href{mailto:vishwa@cmi.ac.in}{\small{\texttt{vishwa@cmi.ac.in}}}}
        \\
        &&&\\
		\multicolumn{2}{c}{\textbf{Aditi Sethia}} & \multicolumn{2}{c}{\textbf{Rakshitha}} 
        \\
		\multicolumn{2}{c}{Indian Institute of Science, Bangalore} & \multicolumn{2}{c}{Indian Institute of Technology, Delhi} 
        \\ 
		\multicolumn{2}{c}{\href{mailto:aditisethia@iisc.ac.in}{\small{\texttt{aditisethia@iisc.ac.in}}}} & \multicolumn{2}{c}{\href{mailto:rakshitha.mt121@maths.iitd.ac.in}{\small{\texttt{	rakshitha.mt121@maths.iitd.ac.in}}}}
	\end{tabular}
}
\pgfplotsset{compat=1.18}
\newcommand{\AS}[1]{{\color{red}{AS: }{#1} \color{red}}}
\newcommand{\vishwa}[1]{{\color{blue}{Vishwa: }{#1} \color{blue}}}
\newcommand{\EF}[1]{\ifstrempty{#1}{\textrm{\textup{EF}}}{\textrm{\textup{EF{$#1$}}}}}
\newcommand{\EQ}[1]{\ifstrempty{#1}{\textrm{\textup{EQ}}}{\textrm{\textup{EQ{$#1$}}}}}
\newcommand{\EQX}{\textrm{\textup{EQX}}}
\newcommand{\eqset}{\mathcal{E}}
\newcommand{\mcT}{\mathcal{T}}
\newcommand{\ubar}[1]{\text{\b{$#1$}}}
\date{}
\begin{document}

\maketitle 
 

\begin{abstract}\label{sec:abstract}
Equitability is a well-studied fairness notion in fair division, where an allocation is equitable if all agents receive equal utility from their allocation. For indivisible items, an exactly equitable allocation may not exist, and a natural relaxation is EQ1, which stipulates that any inequitability should be resolved by the removal of a single item. In this paper, we study equitability in the context of randomized allocations. Specifically, we aim to achieve equitability in expectation (ex ante EQ) and require that each deterministic outcome in the support satisfies ex post EQ1. Such an allocation is commonly known as a `Best of Both Worlds' allocation, and has been studied, e.g., for envy-freeness and MMS.

We characterize the existence of such allocations using a geometric condition on linear combinations of EQ1 allocations, and use this to give comprehensive results on both existence and computation. For two agents, we show that ex ante EQ and ex post EQ1 allocations always exist and can be computed in polynomial time. For three or more agents, however, such allocations may not exist. We prove that deciding existence of such allocations is strongly NP-complete in general, and weakly NP-complete even for three agents. We also present a pseudo-polynomial time algorithm for a constant number of agents. We show that when agents have binary valuations, best of both worlds allocations that additionally satisfy welfare guarantees exist and are efficiently computable.

\end{abstract}
\section{Introduction}\label{sec:introduction}

Allocating a set of valuable resources among interested agents with diverse preferences is a fundamental problem, studied formally since at least the 1940s \cite{S48division, DS61cut}. 
Over the decades, it has attracted sustained interest from many disciplines, including economics, mathematics, and computer science \cite{M04fair,10.5555/3033138,brams1996fair,VARIAN197463}. A central goal in such settings is to ensure that the allocation satisfies certain notions of \emph{fairness} so that no individual is unduly disadvantaged by the outcome.
This problem arises in many real-world settings, such as 
resolving border disputes \cite{brams1996fair}, dividing rent \cite{Su1999RentalHS,10.1145/3131361}, assigning courses to students \cite{10.1287/opre.2016.1544}, allocating subsidized houses \cite{Kamiyama2021OnTC,10.5555/3545946.3599039}, and assigning conference papers to reviewers \cite{Lian_Mattei_Noble_Walsh_2018}. There have been recent implementations of fair division algorithms (see spliddit.org \cite{10.1145/2728732.2728738} and fairoutcomes.com). Given its ubiquity and importance, numerous fairness notions have been proposed to capture different ethical and practical desiderata in such divisions.

The most commonly studied notion of fairness is \emph{envy freeness} (EF) \cite{F67resource}, which requires that no agent prefers another's allocation over their own. Envy freeness avoids interpersonal comparisons of utility. In an envy free allocation, an agent could get very large utility, while another agent gets almost none.\footnote{Consider, e.g., an instance with $n$ agents, $n$ items, where the first agent gets utility 1 from the first item and 0 from all the others, and all other agents get utility $1/n$ from each item. Allocating item $i$ to agent $i$ is an envy free allocation where the first agent gets utility 1, while all other agents get utility $1/n$.} Another important notion in fairness is \emph{equitability} \cite{DS61cut} (EQ), which demands that all agents derive the same level of utility from their respective allocation. Equitability is particularly appealing in scenarios where equal perceived benefit is critical. In empirical studies with human subjects, equitability
has been demonstrated to have a significant impact on the perceived fairness of allocations \cite{HERREINER2010238,Herreiner2009}. Additionally, equitability plays a crucial role in applications such as divorce settlements \cite{brams1996fair} and rental harmony \cite{10.1145/3131361}. Equitable allocations are also well-studied in the literature, and previous work has studied existence~\cite{GMT14near,hosseini2025equitableallocationsmixturesgoods,barman2024nearly}, welfare guarantees~\cite{10.5555/3398761.3398810,FSV+19equitable,SCD23equitability,10.1007/978-3-031-43254-5_16}, as well as allocations that satisfy both approximate envy freeness and equitability~\cite{aziz21}. 

When dealing with indivisible resources, which can not be assigned fractionally, exact envy-freeness or equitability is often impossible. Thus several approximations have been introduced, such as envy-freeness up to one item (\EF{1}) \cite{B11combinatorial, LMM+04approximately} and equitability up to one item (\EQ{1}) \cite{GMT14near, FSV+19equitable}. These relaxations allow for a small degree of unfairness, permitting envy or inequitability to be eliminated by hypothetically removing at most one item from the larger-valued bundle. \EF{1} and \EQ{1} allocations always exist, and can also be computed efficiently for a large class of valuations \cite{LMM+04approximately, B11combinatorial,GMT14near}.

An alternative to approximation-based remedies for non-existence is to employ randomization, aiming to achieve fairness in expectation. Both envy-freeness and equitability can be trivially satisfied in expectation by allocating all goods to a single agent chosen uniformly at random. However the realized allocation is clearly unfair, since one agent receives everything, leaving all others with nothing.

Recent work \cite{Aziz2023BestOB} asked if randomization allowed us to get the best of both worlds. For envy-freeness, this meant a randomized allocation that is envy free ex ante (in expectation, prior to realisation of the random bits) and \EF{1} ex post (after the realization). 
Aziz et al.~\cite{Aziz2023BestOB} show that such an allocation always exists using the Probabilistic Serial algorithm \cite{BOGOMOLNAIA2001295} and Birkhoff's decomposition algorithm \cite{1573387450959988992,Neumann+1953+5+12} as subroutines. Hence, through randomization, stronger guarantees on fairness are obtainable. Subsequent works have further explored this in the context of envy-based \cite{10.1145/3670865.3673592,10.5555/3545946.3598686} and share-based fairness \cite{10.1007/978-3-031-22832-2_14}. Prior work also studies best of both worlds equitable allocations in restricted instances with chores~\cite{SunC25}.

Given the relevance of equitability as a fairness concept, a fundamental question to address is whether there is a randomized ex ante \EQ{} allocation which is ex post \EQ{1}. In this work, we comprehensively address this question and present a complete landscape of existence and tractability. We show that for two agents with normalised additive valuations, an allocation that is ex ante \EQ{} and ex post \EQ{1} always exists and can be computed efficiently. For binary valuations ($v_i(g) \in \{0, 1\}$ for any $i \in N$ and $g \in M$) and any number of agents, we exhibit existence and efficient computation coupled with strong welfare guarantees. However, beyond binary valuations, even for three agents, such allocations may not always exist, and the corresponding decision problem becomes NP-hard. Our results are in stark contrast with the results on envy freeness, where ex ante EF and ex post EF1 allocations always exist. Our techniques also differ from prior work, and include a geometric characterization of best of both worlds equitable allocations, the use of duality, and a bihierarchy theorem on the decomposition of a fractional LP solution into integral solutions~\cite{10.1257/aer.103.2.585}. 


We also introduce and study $i$-biased \EQ{1} allocations, which are \EQ{1} allocations where agent $i$ has maximum value among all agents. While \EQ{1} is a notion of fairness, an $i$-biased \EQ{1} allocation asks for a small amount of bias towards agent $i$. We show that for two agents, existence of $i$-biased \EQ{1} allocations characterizes existence of ex ante \EQ{} and ex post \EQ{1} allocations. Curiously, we show that in some instances, there may be an agent $i$ for which an $i$-biased \EQ{1} allocation does not exist, and hence every EQ{1} allocation disfavours $i$.

\subsection{Our Contributions}\label{subsec:contributions}

We present comprehensive results on both the existence and computation of both \EQ{} + \EQ{1} and \EQ{} + \EQX{} allocations. Our first result is a geometric characterization of instances that possess a BoBW allocation.

\begin{restatable}{theorem}{characterizationeqone}
\label{thm:any_convex_comb}
    Let \(\mathcal{I}\) be a fair division instance with $n$ agents and normalised valuations. The following statements are equivalent.
    \begin{enumerate}
        \item \(\mathcal{I}\) admits an ex ante \EQ{} and ex post \EQ{1} allocation.
        \item For any \(\lambda=(\lambda_1,\lambda_2,\ldots,\lambda_n)\in \mathbb{R}_n\) with $\sum_i \lambda_i = 0$, there exists an \EQ{1} allocation \(A\) such that \(\sum_{i=1}^{n}\lambda_i v_i(A_i)\ge 0\).
    \end{enumerate}
    
    Similarly, \(\mathcal{I}\) admits an ex ante \EQ{} and ex post \EQX{} allocation iff for any $\lambda \in \mathbb{R}_n$ with $\sum_i \lambda_i = 0$, there exists an \EQX{} allocation $A$ such that $\sum_{i=1}^n \lambda_i v_i(A_i) \ge 0$.
\end{restatable}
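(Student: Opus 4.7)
The plan is to reformulate BoBW existence as the intersection of a polytope with a line, and then invoke a separating hyperplane / LP duality argument.

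First, I would observe that since there are only finitely many EQ1 allocations, any randomization is a distribution over this finite set $\mathcal{A}_{\mathrm{EQ1}}$. Writing $u^A \in \mathbb{R}^n$ for the utility vector $(v_1(A_1),\dots,v_n(A_n))$, an ex ante \EQ{} and ex post \EQ{1} allocation is precisely a convex combination $\sum_A p_A u^A$ that lands on the diagonal line $D = \{(c,c,\dots,c) : c \in \mathbb{R}\}$. Thus (1) is equivalent to $C \cap D \ne \emptyset$, where $C = \conv\{u^A : A \in \mathcal{A}_{\mathrm{EQ1}}\}$ is a polytope in $\mathbb{R}^n$.

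Next, I would use the separating hyperplane theorem (or equivalently a Farkas-type LP duality argument) to characterize $C \cap D = \emptyset$. If $C \cap D = \emptyset$, then since both are closed convex and $C$ is compact, there is a hyperplane $\lambda \cdot x = \alpha$ with $\lambda \cdot u^A > \alpha$ for all $A \in \mathcal{A}_{\mathrm{EQ1}}$ and $\lambda \cdot y < \alpha$ for all $y \in D$. Writing $y = (c,\dots,c)$ shows $\lambda \cdot y = c\sum_i \lambda_i$, which is bounded above uniformly in $c \in \mathbb{R}$ only if $\sum_i \lambda_i = 0$, forcing $\alpha > 0$. Thus $\lambda \cdot u^A > 0$ for every EQ1 allocation $A$. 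Replacing $\lambda$ with $-\lambda$ yields $\sum_i (-\lambda_i) v_i(A_i) < 0$ for every EQ1 allocation, and $\sum_i(-\lambda_i) = 0$, which directly violates condition (2). Conversely, if (2) fails, then there is some $\lambda$ with $\sum_i \lambda_i = 0$ and $\sum_i \lambda_i v_i(A_i) < 0$ for every EQ1 allocation $A$; this $\lambda$ (with $\alpha = \frac{1}{2}\min_A \lambda \cdot u^A < 0$ and sign flipped) gives a hyperplane separating $C$ from $D$, so $C \cap D = \emptyset$ and (1) fails.

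I would then note that the EQX half of the statement is obtained by literally substituting $\mathcal{A}_{\mathrm{EQX}}$ for $\mathcal{A}_{\mathrm{EQ1}}$ everywhere in the argument above, since the proof uses nothing about EQ1 beyond the fact that the set of allocations is finite and nonempty (finiteness of $M$ gives finiteness; to apply the hyperplane separation we do need $C$ nonempty, which is trivial as soon as at least one EQ1/EQX allocation exists, guaranteed by standard existence results).

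The main obstacle I anticipate is not conceptual but notational: making sure the two directions of the Farkas/separation argument line up with the exact form of condition (2) (a weak inequality $\ge 0$, and a universal quantifier over $\lambda$), in particular handling the passage between a strictly separating hyperplane and the weak inequality via taking $\alpha \to 0$ (or using compactness of $C$ and finiteness of $\mathcal{A}_{\mathrm{EQ1}}$ to replace sup with max). Once that bookkeeping is done, the equivalence is immediate.
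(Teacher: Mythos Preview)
Your proposal is correct and takes essentially the same approach as the paper: both arguments reduce the question to whether the convex hull of the \EQ{1} utility vectors meets the diagonal line, and then invoke duality---you via the separating hyperplane theorem, the paper via writing out an explicit primal/dual LP pair and applying strong duality. The only cosmetic slip is that in the $\neg(2)\Rightarrow\neg(1)$ direction you want $\alpha$ between $\max_A \lambda\cdot u^A$ and $0$ (not $\min$), but as you note this is pure bookkeeping and does not affect the argument.
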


The characterization forms the basis for most of our results. 

\paragraph*{Two Agents.} Using the characterization, we show first that for instances with two agents and normalised valuations, an ex ante \EQ{} and ex post \EQ{1} allocation always exists. In fact, the characterization in this case is equivalent to proving the existence of an $i$-biased \EQ{1} allocation (i.e., where agent $i$ has the largest value), for $i \in \{1,2\}$. This proof is possibly the most technical result, and requires a careful analysis of the allocation obtained as we start with an \EQX{} allocation, transfer goods one-by-one from the rich agent to the poor agent, and then possibly swap the allocation of the two agents. 

An obvious question is if we can show the existence of an ex ante \EQ{} and ex post \EQX{} allocation. Particularly since in the prior case, we start with an \EQX{} allocation, this stronger result would not have been surprising. We show however that this is not true. Even with two agents, three items, and normalised valuations, an ex ante \EQ{} and ex post \EQX{} allocation may not exist.

\paragraph*{Binary Valuations.} If agents have binary valuations over the set of goods, it is not difficult to see that an ex ante \EQ{} and ex post \EQ{1} allocation must always exist. For this, any good that has value zero for some agent is simply assigned to the agent. This only leaves goods that have value one for every agent, which can then be assigned using a slight modification of the Birkhoff-von Neumann theorem (e.g.,~\cite{Aziz2023BestOB}). However, this allocation is wasteful, since many goods may be assigned to agents that have zero value for them.

Instead, we show a stronger result. We show that for agents with binary valuations, there exists an ex ante \EQ{} and ex post \EQ{1} allocation that has optimal social welfare (or total value over the agents) over all \EQ{} allocations. That is, we can get an ex ante \EQ{} and ex post \EQ{1} allocation with social welfare equal to that of highest welfare \EQ{} allocation (that could be fractional). Thus the restriction that the \EQ{} allocation be supported on \EQ{1} allocations does not cause any loss in the social welfare. We find this result surprising, since clearly not all fractional \EQ{} allocations can obtained as a convex combination of \EQ{1} allocations.

Our result in this case is based on carefully rounding a linear program using results on LPs with bihierarchical constraint structures --- a generalization of the Birkhoff-von Neumann theorem --- from Budish et al.~\cite{10.1257/aer.103.2.585}.

\paragraph*{General Instances: Existence.} For general instances, even for agents with normalised valuations, we show via an example that an ex ante \EQ{} and ex post \EQ{1} allocation may not exist. This is in contrast to envy-freeness, where an ex ante EF and ex post EF1 allocation always exists~\cite{Aziz2023BestOB}. Recall that even for two agents with normalised valuations and three items, an ex ante \EQ{} and ex post \EQX{} allocation may not exist.

In fact, our example showing the non-existence of ex ante \EQ{} and ex post \EQ{1} allocations is tight in multiple regards. The instance consists of just three agents and four items --- with two agents, or with three agents and three items, an ex ante \EQ{} and ex post \EQ{1} allocation always exists. The example consists of just two types of agents (where agents are of the same type if they have identical valuations). For just one type of agent, i.e., when all agents have identical valuations, there always exists an ex ante \EQ{} and ex post \EQ{1} allocation. Finally, there are just two types of goods as well. Note that instances with normalised valuations and a single type of good are trivial instances, where $v_i(g) = c$ for some constant $c$, for all agents $i$ and goods $g$.

\paragraph*{General Instances: Complexity.} We further study the computational complexity of determining the existence of ex ante \EQ{} and ex post \EQ{1} allocations. We show that, for three agents, it is weakly NP-hard to determine if there exists an ex ante \EQ{} and ex post \EQ{1} allocation, and with $n$ agents, the problem is strongly NP-hard.

We then show that the weak NP-hardness shown is the best possible, by giving a pseudopolynomial time algorithm for determining the existence of ex ante \EQ{} and ex post \EQ{1} allocations when the number of agents is constant. This is based on a somewhat technical dynamic program. In fact, the dynamic program is quite versatile, and can be slightly modified to determine the existence of (i) ex ante \EQ{} and ex post \EQX{} allocations, as well as existence of (ii) $i$-biased allocations, for any agent $i$. 

Finally, we show that for general instances, determining the existence of an $i$-biased allocation for an agent $i$ is NP-hard. 

\section{Preliminaries}\label{sec:prelims}

An instance \(\mathcal{I}\) of the fair division problem is specified by a tuple $\langle N, M, \mathcal{V} \rangle$, where $N$ is a set of $n$ \emph{agents}, $M$ is a set of $m$ indivisible \emph{items} (or \emph{goods}), and $\mathcal{V}$ is the \emph{valuation profile} consisting of each agent's valuation function $\{v_i\}_{i \in N}$. For any agent $i \in N$, its valuation function $v_i: M \rightarrow \mathbb{Z^+}$ specifies its numerical value (or utility) for each good in $M$. Valuations are additive, and hence for a subset $S \subseteq M$, $v_i(S) = \sum_{g \in S} v_i(g)$. If $v_i(M) = t$ for some constant $t \in \mathbb{Z^+}$ for all agents $i \in N$, then the instance is said to be \emph{normalised}. 

\vspace{0.2cm}

\noindent \textbf{Allocation.} A \emph{bundle} refers to any (possibly empty) subset of goods. An \emph{integral} allocation $A \coloneqq (A_1,\dots,A_n)$ is a partition of the set of goods $M$ into $n$ bundles, one for each agent, and $A_i$ denotes the bundle assigned to agent $i$. Note that $A$ is also specified by an $n \times m$ binary matrix (also denoted as $A$) with exactly one $1$ in every column. If a good can be fractionally assigned to multiple agents, such an allocation is called 
a \emph{fractional} allocation. It is specified by an $n \times m$ column-stochastic matrix $A$ where $A_{i, g}$ denotes the fraction of good $g$ assigned to agent $i$ and $\sum_{i \in N} A_{i, g} = 1$, i.e., each good is completely assigned. Note that the polytope of fractional allocations $\{A \in \mathbb{R}^{n \times m}_+ : \text{ for all $g \in M$}, \sum_{i \in N} A_{i,g} = 1\}$ is an integral polytope (e.g., the constraint matrix is totally unimodular, since each variable appears exactly once with coefficient $+1$). Hence, any fractional allocation can be obtained as a distribution over integral allocations. Further, by Carath\'eodory's theorem, any fractional allocation can be obtained as a distribution over at most $mn+1$ integral allocations. In the following, an allocation typically refers to an integral allocation.

A randomized allocation $X$ is a lottery over a set of integral allocations $\{A^k\}_{k \in [\ell]}$, where each of the allocation $A^k$ is chosen with probability $p_k \in [0, 1]$ and $\sum_{k \in [\ell]} p_k = 1$. Note that $X$ corresponds to the fractional allocation $\sum_{k \in [\ell]} p_k A^k$ in expectation. The integral allocations $\{A^k\}_{k \in [\ell]}$ are said to be the support of $X$.

\vspace{0.2cm}
\noindent \textbf{Equitable Allocation.} An allocation is said to be \emph{equitable} (\EQ{}) if for any pair of agents $i,j \in N$, we have $v_i(A_i) = v_j(A_j)$~\cite{DS61cut}, and \emph{equitable up to one good} (\EQ{1}) if for any pair of agents $i,j \in N$ such that $A_j \neq \emptyset$, there is a good $g \in A_j$ such that $v_i(A_i) \geqslant v_j(A_j \setminus \{g\})$~\cite{GMT14near,FSV+19equitable}. An allocation is \emph{equitable up to any good} (\EQX) (a stronger guarantee than \EQ{1}) if for any pair of agents $i,j \in N$ such that $A_j \neq \emptyset$, we have $v_i(A_i) \geqslant v_j(A_j \setminus \{g\})$ for all goods $g \in A_j$. We say that an agent $i$ is \emph{rich} under an \EQ{1} allocation $A$ if $v_i(A_i) \geq v_j(A_j) ~\forall~ j \in N \setminus \{i\}$. Analogously, an agent $j$ is \emph{poor} if $v_j(A_j) \leq v_i(A_i) ~\forall~i \in N \setminus \{j\}$. An \EQ{1} allocation $A$ is said to be $i$-biased \EQ{1} if agent $i$ is a rich agent in $A$.

\vspace{0.2cm}

Define $\eqset := \{A : A \text{ is an \EQ{1} allocation }\}$ be the set of all possible \EQ{1} allocations. For an allocation $A$ in an instance with $n$ agents, define $\vec{v}(A) := (v_1(A_1), \ldots, v_n(A_n)) \in \mathbb{R}^n$ as the vector of agent values. Let $\vec{v}(\eqset) := \{\vec{v}(A): A \in \eqset\}$. This is the set of all agent values obtainable in an \EQ{1} allocation.

\vspace{0.2cm}

\noindent \textbf{Best of Both Worlds Fairness.} A randomized allocation $X$ is \emph{ex ante \EQ} if every agent derives the same utility in expectation. That is, $\mathbb{E}[v_i(X_i)] = \mathbb{E}[v_j(X_j)] ~\forall~i, j \in N$. Equivalently, $X$ is ex ante \EQ{} if the fractional allocation corresponding to $X$ is EQ.\footnote{We use the fact that every fractional allocation can be obtained as a distribution over integral allocations.} 
The randomized allocation $X$ is \emph{ex post \EQ{1}} if it can be obtained as a distribution over \EQ{1} allocations. Similarly, the allocation $X$ is \emph{ex post \EQX{}} if it can be obtained as a distribution over \EQX{} allocations. We use \EQ{} + \EQ{1} to denote a randomized allocation $X$ that is ex ante \EQ{} and ex post \EQ{1}, and \EQ{} + \EQX{} for a randomized allocation $X$ that is ex ante \EQ{} and ex post \EQX{}. An allocation is \emph{Best of Both Worlds} (or BoBW) if it is either \EQ{} + \EQ{1}, or \EQ{} + \EQX{}.

\vspace{0.2cm} 

Note firstly that the assumption of normalization is crucial in the setting of indivisible items. Without this assumption, an \EQ{} + \EQ{1} allocation may not exist, even in trivial instances. E.g., consider an instance with two agents and two items. Agent 1 has value 10 for both items, and agent 2 has value 1 for both items. In any \EQ{1} allocation, agent 1 must get at least 1 item. But then in every \EQ{1} allocation, agent 1 has strictly greater value than agent 2. This must then be true of any distribution over \EQ{1} allocations as well, and hence any distribution over \EQ{1} allocations cannot 
give a fractional \EQ{} allocation.

We thus assume all valuations are normalised unless otherwise stated. We further note the following trivial cases: if either (i) all agents have identical valuations, or (ii) $m=n$, there always exists an \EQ{} + \EQ{1} allocation that can be computed efficiently.

\begin{restatable}{proposition}{trivial}
    Given a fair division instance with either (i) agents with identical valuations, or (ii) an equal number of goods and agents (i.e., $m=n$) and normalised valuations, an \EQ{} $+$ \EQ{1} allocation always exists and can be computed efficiently.
    \label{prop:trivial}
\end{restatable}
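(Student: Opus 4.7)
The plan is to handle the two cases separately, using symmetry arguments in each case to construct a small support of \EQ{1} allocations whose uniform (or appropriately weighted) average is ex ante \EQ{}.

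For case (i), where all agents share a common valuation $v$, I would first compute any \EQ{1} allocation $A = (A_1, \ldots, A_n)$; such an allocation exists and is computable in polynomial time by the standard greedy procedure of repeatedly handing the next good to a currently lowest-valued agent~\cite{GMT14near}. Because the valuations are identical, the \EQ{1} property depends only on the multiset of bundles and not on which agent holds which bundle; hence any permutation of the bundles remains \EQ{1}. In particular, for each $k \in \{0, 1, \ldots, n-1\}$ let $A^{(k)}$ be the cyclic shift in which agent $i$ receives $A_{((i+k-1) \bmod n) + 1}$. Each $A^{(k)}$ is \EQ{1}, and the uniform distribution over $\{A^{(0)}, \ldots, A^{(n-1)}\}$ gives each agent expected value $\frac{1}{n}\sum_{j=1}^{n} v(A_j) = v(M)/n$, which is the same for all agents. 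This yields the desired ex ante \EQ{} + ex post \EQ{1} allocation, supported on $n$ integral allocations.

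For case (ii), with $m = n$ and $v_i(M) = t$ for all $i$, the key observation is that \emph{any} allocation in which each agent receives exactly one good is automatically \EQ{1}: removing the unique good from any non-empty bundle leaves value zero, so $v_i(A_i) \geq v_j(A_j \setminus \{g\}) = 0$ holds trivially for every pair $i,j$. I then take the fractional allocation $A^\star$ defined by $A^\star_{i,g} = 1/n$ for every agent--good pair. This is an $n \times n$ doubly stochastic matrix and satisfies $\mathbb{E}[v_i(A^\star_i)] = v_i(M)/n = t/n$ for all $i$, so it is ex ante \EQ{}. By the Birkhoff--von Neumann theorem, $A^\star$ can be decomposed in polynomial time into a convex combination of at most $n^2$ permutation matrices; each such matrix is a one-to-one matching between agents and goods and is therefore \EQ{1} by the preceding observation.

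There is no substantial obstacle here: the proposition serves as a sanity check for the trivial cases. The only two things to verify carefully are that cyclic shifts of an \EQ{1} allocation under identical valuations remain \EQ{1} (immediate from the definition, since the multiset of bundles is unchanged), and that the Birkhoff--von Neumann decomposition used in case (ii) can be computed in polynomial time (standard). Together these give existence and efficient computation in both cases, completing the proof.
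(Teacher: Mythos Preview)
Your proof is correct, but the routes you take for the two cases differ from the paper's in an interesting mirror-image way.

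For case (i), the paper simply observes that with identical valuations \EQ{} coincides with EF and \EQ{1} with \EF{1}, and then invokes the known existence of ex ante EF + ex post \EF{1} allocations from~\cite{Aziz2023BestOB}. Your argument is more self-contained: you build an \EQ{1} allocation directly and average over its cyclic shifts. This avoids the external citation and makes the proof stand on its own, at the modest cost of spelling out the (easy) invariance of \EQ{1} under bundle permutations.

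For case (ii), the roles reverse. The paper writes down the $n$ cyclic-shift allocations explicitly and notes that their uniform mixture assigns every good to every agent with probability $1/n$; this is already the desired decomposition, with support size exactly $n$. You instead start from the uniform fractional allocation and appeal to Birkhoff--von Neumann. This is perfectly valid, but it is heavier machinery than needed and may produce up to $n^2$ allocations in the support rather than $n$. The paper's explicit construction is thus a bit sharper and more elementary here.

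Both approaches are sound and polynomial-time; the differences are purely stylistic.
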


\begin{proof}
For agents with identical valuations, we observe that \EQ{} and EF allocations coincide, as do \EQ{1} and EF1 allocations. The first result then follows from Theorem 2 in~\cite{Aziz2023BestOB}, which shows the existence of EF + EF1 allocations. For the second result, with $m=n$, let $g_1$, $\ldots$, $g_n$ be the $n$ goods in the instance. For $k \in \{0, \ldots, n-1\}$, consider the $n$ integral allocations $A^k$, where $A_i^k = \{g_{((i+k-1)\text{ mod } n) + 1}\}$. Thus in $A^0$, agent $i$ receives good $g_i$, and in $A^k$, the allocation is shifted by $k$ items. Clearly, each $A^k$ is an \EQ{1} allocation. Further, consider the randomized allocation $X$ that picks $A^k$ with probability $1/n$. Each agent $i$ receives each good $g$ with equal probability in $X$, and hence this is an \EQ{} + \EQ{1} allocation. 
\end{proof}

\subsection*{A Characterization of BoBW Instances}

We now present a geometric characterization of instances that admit \EQ{} + \EQ{1} (or \EQ{} + \EQX{}) allocations in \Cref{thm:any_convex_comb}. This forms the basis for many of our further results.

\characterizationeqone*

\begin{proof} We show the proof for \EQ{} + \EQ{1} allocations. The proof for \EQ{} + \EQX{} allocations is very similar, with \EQX{} allocations taking the place of \EQ{1} allocations.

\vspace{0.2cm}
\noindent    \textbf{(1) $\implies$ (2):} Suppose \(\mathcal{I}\) admits an \EQ{} + \EQ{1} allocation. Let \(X\) be a randomized \EQ{} allocation over a support \((A^1, A^2, \ldots, A^\ell)\) of \EQ{1} allocations with corresponding probabilities \((p_1, p_2, \ldots, p_{\ell})\). By the definition of ex ante \EQ{}, we have \( \mathbb{E}[v_i(X_i)] = \mathbb{E}[v_j(X_j)] \quad \forall i,j \in N.\)
    That is, all agents receive the same expected utility under the randomized allocation \(X\).

    Suppose for the sake of contradiction there exists some \(\lambda \in \mathbb{R}_n\) with $\sum_i \lambda_i = 0$ such that for all \EQ{1} allocations \(A \in \eqset \), $\sum_{i} \lambda_i v_i(A_i) < 0$. Then for the randomized allocation $X$, 

    \begin{align*}
        \sum_i \lambda_i \mathbb{E}[v_i(X_i)] 
        &= \sum_i \lambda_i  \left( \sum_{k=1}^\ell p_k v_i(A^k_i) \right) \\
        &= \sum_{k=1}^\ell p_k \left( \sum_i \lambda_i v_i(A^k_i) \right) < 0 \, .
    \end{align*}

    Since \(X\) is ex ante \EQ{}, each agent has the same expected utility, say \(u\), and hence
    \[
        \sum_i \lambda_i \mathbb{E}[v_i(X_i)] = u \, \sum_i \lambda_i =  0.
    \]
    This gives a contradiction, and hence for every \(\lambda \in \mathbb{R}_n\), with $\sum_i \lambda_i = 0$, there exists some \EQ{1} allocation $A \in \eqset$ such that \( \sum_{i=1}^{n} \lambda_i  v_i(A_i) \geq 0. \)

\vspace{0.2cm}
    \noindent \textbf{(1) \(\impliedby\) (2):}  Let \( \eqset = \{A^1, A^2, \dots, A^\ell\} \) be the (finite) set of all \EQ{1} allocations. Consider the following linear program, with variables $\mu$ and $\{p_k\}_{k \in [\ell]}$: 

    \[
        \begin{aligned}
        \text{maximize} \quad & 0 \\
        \text{subject to} \quad  &-\sum_{k=1}^\ell p_k \cdot v_i(A^k_i) +\mu = 0 \quad \text{for all } i \in [n], \\
        &\quad \sum_{k=1}^\ell p_k = 1, \\
        &\quad  p_k \geq 0 \quad \text{for all } k \in [\ell]
        \end{aligned} \]

    This linear program seeks a convex combination of \EQ{1} allocations such that all agents receive the same expected utility \( \mu \), i.e., the convex combination is ex ante \EQ{}. Any feasible solution to this LP gives a randomized allocation that is \EQ{} + \EQ{1} (since all allocations in the support are \EQ{1}). We show that this LP is feasible, completing the proof.

    To show feasibility, we consider the dual linear program. For the dual, we introduce dual variables \( \lambda_i \in \mathbb{R} \) for each agent \(i\), and \( \theta \in \mathbb{R} \) for the normalization constraint. 
    \[
    \begin{aligned}
    \text{minimize} \quad & \theta \\
    \text{subject to} \quad & -\sum_{i=1}^n \lambda_i \cdot v_i(A^k_i) + \theta \geq 0 \quad \text{for all } k \in [\ell], \\
    &\quad \sum_{i=1}^n \lambda_i = 0
    \end{aligned} \]

    It is evident that the dual has a feasible solution given by \(\lambda_i = 0\) for all \(i \in [n]\) and \(\theta = 0\). From the dual constraints, we have:
    \[
    \theta \geq \sum_{i=1}^n \lambda_i \cdot v_i(A^k_i) \quad \text{for all } k \in [\ell]
    \quad \implies \quad
    \theta \geq \max_{k \in [\ell]} \sum_{i=1}^n \lambda_i \cdot v_i(A^k_i)
    \]

    By assumption, if $\theta, (\lambda_i)_{i \in [n]}$ is a feasible dual solution, then there exists \( k \in [\ell] \) such that for the \EQ{1} allocation $A^k$, \( \sum_i \lambda_i v_i(A^k_i) \geq 0, \)
    and hence \( \theta \geq 0 \). Therefore, the dual is bounded below. By strong duality, the primal is feasible. This completes the proof.
    \end{proof}

\begin{remark} Note that the primal has an exponential number of variables owing to the potentially exponential number of \EQ{1} allocations. Since each (possibly fractional) allocation $A$ is represented by the valuation vector $\vec{v}(A) \in \mathbb{R}^n$, if $X$ is an ex ante \EQ{} and ex post \EQ{1} (or \EQX{}) allocation, then by Carath\'eodory's theorem, it can be obtained as a distribution over at most $n+1$ \EQ{1} (or \EQX{}) allocations. In particular, there is a polynomial-sized certificate for existence of \EQ{} + \EQ{1} or \EQ{} + \EQX{} allocations in an instance, which is this succinct distribution.
\end{remark}

When there are only two agents, we have the following corollary.

\begin{corollary}\label{cor:convex_comb}
Let \(\mathcal{I}\) be a fair division instance with \(n = 2\) agents, \(m\) items, and normalised valuations. The following statements are equivalent:
\begin{enumerate}
    \item \(\mathcal{I}\) admits an \EQ{} $+$ \EQ{1} allocation.
    \item There exists an \(i\)-biased \EQ{1} allocation for each \(i \in \{1, 2\}\).
\end{enumerate}

Similarly, \(\mathcal{I}\) admits an \EQ{} $+$ \EQX{} allocation iff there exists an \(i\)-biased \EQX{} allocation for each \(i \in \{1, 2\}\).
\end{corollary}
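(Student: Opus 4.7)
The plan is to specialize \Cref{thm:any_convex_comb} to the case $n=2$ and observe that, in two dimensions, the hyperplane condition collapses into essentially two inequalities whose satisfiability is exactly the existence of $1$-biased and $2$-biased \EQ{1} (respectively \EQX{}) allocations.

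First I would invoke \Cref{thm:any_convex_comb}: an \EQ{} + \EQ{1} allocation exists iff for every $\lambda = (\lambda_1, \lambda_2) \in \mathbb{R}^2$ with $\lambda_1 + \lambda_2 = 0$, there exists an \EQ{1} allocation $A$ with $\lambda_1 v_1(A_1) + \lambda_2 v_2(A_2) \geq 0$. Using $\lambda_2 = -\lambda_1$, the inequality becomes
\[
\lambda_1 \bigl(v_1(A_1) - v_2(A_2)\bigr) \geq 0.
\]
For $\lambda_1 = 0$ this is trivially satisfied by any \EQ{1} allocation (which always exists). For $\lambda_1 > 0$ the condition reduces to the existence of an \EQ{1} allocation $A$ with $v_1(A_1) \geq v_2(A_2)$, i.e.\ a $1$-biased \EQ{1} allocation. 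Symmetrically, for $\lambda_1 < 0$ it reduces to the existence of a $2$-biased \EQ{1} allocation.

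This gives both directions at once: the universal statement over $\lambda$ in \Cref{thm:any_convex_comb} is equivalent to the conjunction ``there is a $1$-biased \EQ{1} allocation and a $2$-biased \EQ{1} allocation,'' which is exactly statement (2) of the corollary. The argument for \EQ{} + \EQX{} is identical, replacing \EQ{1} by \EQX{} throughout and appealing to the \EQX{} version of \Cref{thm:any_convex_comb}.

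There is no real obstacle here: the work has already been done in \Cref{thm:any_convex_comb}, and the two-agent case merely makes the geometry one-dimensional (the hyperplane $\lambda_1 + \lambda_2 = 0$ is parametrized by the sign of a single scalar). The only thing to be careful about is the boundary case $\lambda_1 = 0$, which is handled by noting that \EQ{1} allocations always exist; every other $\lambda$ rescales to one of the two biased conditions without changing feasibility.
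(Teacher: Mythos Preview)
Your proposal is correct and matches the paper's own argument essentially line for line: the paper likewise specializes \Cref{thm:any_convex_comb} to $n=2$, writes the condition as $\lambda v_1(A_1) - \lambda v_2(A_2) \ge 0$ for $\lambda \in \mathbb{R}$, and reads off the $1$-biased and $2$-biased requirements from the sign of $\lambda$. Your explicit handling of the $\lambda_1 = 0$ case is a minor addition the paper leaves implicit.
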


To see the corollary, observe that for two agents, the second condition in Theorem~\ref{thm:any_convex_comb} is equivalent to the statement that for any $\lambda \in \mathbb{R}$, there exists an \EQ1{} (or \EQX{}) allocation $A$ such that $\lambda v_1(A_1) - \lambda v_2(A_2) \ge 0$. Then for $\lambda > 0$, this is equivalent to the condition that there exists a 1-biased \EQ{1} (or \EQX{}) allocation. For $\lambda <0$, this is equivalent to the condition that there exists a 2-biased \EQ{1} (or \EQX{}) allocation.

\section{Two Agents}\label{sec:twoagents}


In this section, we consider instances with two agents. We show that an allocation that is \EQ{} + \EQ{1} always exists and can be computed in linear time. However, this result does not extend to the stronger notion of \EQ{} + \EQX{}; such allocations do not always exist.

\begin{restatable}{theorem}{bobwtwo}
\label{thm:bobw_n_2}
    Given a fair division instance with two agents with normalised valuations, an \EQ{} $+$ \EQ{1} allocation always exists and can be computed in time \(\mathcal{O}(m)\). 
\end{restatable}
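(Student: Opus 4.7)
The plan is to apply Corollary~\ref{cor:convex_comb}: for $n=2$ it reduces the existence of an \EQ{} + \EQ{1} allocation to showing the existence of both a $1$-biased and a $2$-biased \EQ{1} allocation, which can be handled separately. For the $1$-biased allocation, I would compute an \EQX{} allocation $A$ (which exists for two agents with additive valuations and can be found in linear time by a standard greedy procedure) and relabel so that agent $1$ is rich in $A$. Since \EQX{} implies \EQ{1}, the allocation $A$ itself is a $1$-biased \EQ{1} allocation.

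Constructing a $2$-biased \EQ{1} allocation is the more involved step, and I would build it starting from the same $A$. The key structural consequence of \EQX{} is that $v_1(g) \ge \text{gap} := v_1(A_1) - v_2(A_2) \ge 0$ for every good $g \in A_1$, so any single transfer of a good from $A_1$ to agent $2$ makes agent $2$ weakly rich. My plan is to iteratively transfer goods from the rich agent to the poor agent, tracking \EQ{1} and the identity of the rich agent, and possibly following up with a swap. A natural case split governs which final transformation yields a $2$-biased \EQ{1} allocation: if some $g \in A_1$ has $v_1(g) = \text{gap}$, transferring it immediately produces the desired allocation, with $g$ itself serving as the \EQ{1} witness; if some $g \in A_1$ has $v_2(g) \ge \text{gap}$, the bundle-swap $(A_2, A_1)$ is $2$-biased (since by normalisation $v_2(A_1) \ge v_1(A_2)$) and \EQ{1}. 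The remaining possibility is the residual case where every $g \in A_1$ has $v_1(g) > \text{gap}$ \emph{and} $v_2(g) < \text{gap}$.

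The main obstacle is this residual case, which requires a more delicate argument. Neither a single transfer from $A_1$ to $A_2$ nor the bundle swap suffices; I would instead exhibit an item swap $(g^*, h^*) \in A_1 \times A_2$ that produces the desired allocation. Concretely, combining ``agent $2$ weakly rich'' with ``removing $g^*$ from the new agent-$2$ bundle certifies \EQ{1}'' leads to the requirement $v_1(h^*) + v_2(h^*) \in [\,v_1(g^*) - \text{gap},\ v_1(g^*) + v_2(g^*) - \text{gap}\,]$, and the hard step is proving that at least one such pair always exists. The expected argument combines the \EQX{}-induced lower bound $v_1(g) \ge \text{gap}$ on goods in $A_1$ with the normalisation identity $v_1(M) = v_2(M)$, which constrains the distribution of values in $A_2$ and ensures feasibility of the required swap. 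Once the right transformation is identified, carrying it out is an $O(m)$ operation, and combined with the linear-time computation of the initial \EQX{} allocation this yields the claimed $O(m)$ runtime.
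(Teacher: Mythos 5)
Your reduction via Corollary~\ref{cor:convex_comb} and your handling of the $1$-biased case (take an \EQX{} allocation and relabel so the rich agent is agent $1$) match the paper. The gap is in the residual case of the $2$-biased construction, and it is not merely an unproven step: the lemma you are hoping to prove there --- that some single item swap $(g^*,h^*)\in A_1\times A_2$ always yields a $2$-biased \EQ{1} allocation --- is false. Consider seven goods: $a,b$ with $v_1=200$, $v_2=10$ each, and $c_1,\dots,c_5$ with $v_1=0$, $v_2=76$ each (both agents value $M$ at $400$). The allocation $A=(\{a,b\},\{c_1,\dots,c_5\})$ is \EQX{} with agent $1$ rich and $\mathrm{gap}=20$, and every $g\in A_1$ has $v_1(g)=200>20$ and $v_2(g)=10<20$, so you are in your residual case. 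Any single swap, say $(a,c_1)$, gives bundles $(\{b,c_1\},\{a,c_2,\dots,c_5\})$ with values $(200,314)$; agent $2$ is rich by $114$, but every good in her bundle has $v_2\le 76<114$, so \EQ{1} fails (your window $[v_1(g^*)-\mathrm{gap},\,\sigma(g^*)-\mathrm{gap}]=[180,190]$ contains no $\sigma(h^*)=76$, and no alternative witness helps). Single transfers and the bundle swap also fail here. A $2$-biased \EQ{1} allocation does exist, e.g.\ $(\{b,c_1,c_2\},\{a,c_3,c_4,c_5\})$ with values $(200,238)$ and witness $c_3$, but it differs from $A$ in four goods, so no one-good or two-good modification can reach it.

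The fix, which is what the paper does, is to allow a \emph{multi-good} rebalancing: transfer one good $\hat g$ from the rich bundle to the poor bundle (deliberately overshooting so the poor agent becomes rich), then transfer back the ``compressing'' goods $C=\{g: v_{\text{poor}}(g)\ge v_{\text{rich}}(g)\}$ one at a time until either \EQ{1} is reached or the balance is about to flip, in which case one last transfer followed by a bundle swap works. The quantitative heart of that argument is the inequality $v_1(C)+v_2(C)\ge v_2(\hat g)-\delta$, which guarantees $C$ carries enough value to undo the overshoot; nothing of that form is available for a single pair $(g^*,h^*)$. So you need to replace the item-swap step with an iterative transfer of a set of goods (which still runs in $\mathcal{O}(m)$ since each good moves at most once).
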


\begin{proof}
    By Corollary~\ref{cor:convex_comb}, it suffices to show that an \(i\)-biased \EQ{1} allocation always exists. Therefore, without loss of generality, we consider agent \(1\) and show that Algorithm~\ref{alg:1biased-eq1} always computes a \(1\)-biased \EQ{1} allocation in \(\mathcal{O}(m)\) time.

    By~\cite{GMT14near}, an \EQX{}  allocation $A = (A_1, A_2)$ always exists and can be computed in linear time. If \(A\) is already a \(1\)-biased allocation (i.e., $v_1(A_1) \ge v_2(A_2)$), the algorithm returns allocation $A$ in Line~\ref{line:returneqx}. Otherwise, \(v_1(A_1) < v_2(A_2)\). Define \(\delta = v_2(A_2) - v_1(A_1)\). Since \(A\) is \EQX{}, it follows that for all \(g \in A_2\), we have \(v_2(g) \ge \delta\).
    
    Moreover, by normalization, for any partition \((B_1, B_2)\) of \(M\), we have $v_1(B_1) + v_1(B_2) = v_2(B_1) + v_2(B_2)$, and hence 
    \begin{align}
 v_1(B_1) - v_2(B_2) &= v_2(B_1) - v_1(B_2) \, .\label{eq:balance}
\end{align}

That is, swapping the two bundles in any allocation preserves the magnitude of the utility difference between the agents but reverses its sign. However, such a swap need not necessarily preserve the \EQ{1} property.

We now proceed by considering two cases based on agent \(1\)'s valuation for goods in \(A_2\):

    \vspace{0.2cm}
\noindent \textbf{Case 1:} There exists a good \(\hat{g} \in A_2\) such that \(v_1(\hat{g}) \ge \delta\).

This case (see~\Cref{fig:fig1}) is handled in Step~\ref{step:s1} of the algorithm. Consider the allocation \(A'\) obtained by swapping the bundles: \(A'_1 = A_2\) and \(A'_2 = A_1\).  
By construction,  
\[
v_1(A'_1) - v_2(A'_2) = v_1(A_2) - v_2(A_1) = \delta,
\]  
so agent 1 is the richer agent in \(A'\).  
Moreover, since \(\hat{g} \in A'_1\) and \(v_1(\hat{g}) \ge \delta\), the allocation \(A'\) satisfies \EQ{1}. Thus in Line~\ref{line:returnswap}, the algorithm returns a \(1\)-biased \EQ{1} allocation. 

    \vspace{0.2cm}
\noindent\textbf{Case 2:} For all goods \(g \in A_2\), we have \(v_1(g) < \delta\).

Define the set of \emph{compressing goods} as \(
    C = \{g \in M \mid v_1(g) \ge v_2(g)\}, \)
and let the set of \emph{expanding goods} be \(E = M \setminus C\).  

From the case assumption, \(v_1(g) < \delta\) for all \(g \in A_2\), and since \(A\) is an \EQX{} allocation, we also have \(v_2(g) \ge \delta\) for all \(g \in A_2\). Therefore, \(v_1(g) < v_2(g)\) for all \(g \in A_2\), which implies \(C \subseteq A_1\).

Now, let \(\hat{g} \in A_2\) be an arbitrary good in \(A_2\). We first show that \(v_1(C)+v_2(C) \ge v_2(\hat{g})-\delta\) (see Figure~\ref{fig:fig2}). Since the valuations are normalised, and \(M=C\ \uplus\ E\),  we have, from~\eqref{eq:balance}, that $v_1(C) - v_2(C) = v_2(E) - v_1(E) = \sum_{g \in E} \left(v_1(g) - v_2(g)\right)$. Now recall that $\hat{g} \in E$, and for every good $g \in E$, $v_1(g) - v_2(g) > 0$. Hence, 

\begin{align*}
v_1(C) - v_2(C) &\ge v_2(\hat{g}) - v_1(\hat{g}) > v_2(\hat{g}) - \delta \\
\end{align*}

\noindent since $v_1(\hat{g}) < \delta$ by assumption in this case. Therefore, we have:

\begin{equation}
    \label{eq:eq2}
    v_1(C)+v_2(C) \ge v_1(C)-v_2(C) >  v_2(\hat{g})-\delta \, .
\end{equation}

Now consider the allocation \(A' = (A'_1, A'_2)\), where  
$ A'_1 = A_1 \cup \{\hat{g}\}$, and $A'_2 = A_2 \setminus \{\hat{g}\}$. That is, transfer \(\hat{g}\) from agent 2 to agent 1 (see Figure~\ref{fig:fig2}).
Next, following Steps~\ref{step:begin_for}–\ref{step:end_for} of Algorithm~\ref{alg:1biased-eq1}, we iteratively transfer goods from \(C\) (i.e., from \(A'_1\)) to \(A'_2\) as long as agent 1 remains the richer agent.

    \begin{figure}
    \centering
    \begin{subfigure}[t]{0.48\textwidth}
        \centering
 
\tikzset{
pattern size/.store in=\mcSize, 
pattern size = 5pt,
pattern thickness/.store in=\mcThickness, 
pattern thickness = 0.3pt,
pattern radius/.store in=\mcRadius, 
pattern radius = 1pt}
\makeatletter
\pgfutil@ifundefined{pgf@pattern@name@_ydkoqvyjt}{
\pgfdeclarepatternformonly[\mcThickness,\mcSize]{_ydkoqvyjt}
{\pgfqpoint{0pt}{-\mcThickness}}
{\pgfpoint{\mcSize}{\mcSize}}
{\pgfpoint{\mcSize}{\mcSize}}
{
\pgfsetcolor{\tikz@pattern@color}
\pgfsetlinewidth{\mcThickness}
\pgfpathmoveto{\pgfqpoint{0pt}{\mcSize}}
\pgfpathlineto{\pgfpoint{\mcSize+\mcThickness}{-\mcThickness}}
\pgfusepath{stroke}
}}
\makeatother
\tikzset{every picture/.style={line width=0.75pt}} 

\begin{tikzpicture}[x=0.75pt,y=0.75pt,yscale=-1,xscale=1]

\draw   (121,130) -- (161,130) -- (161,250) -- (121,250) -- cycle ;
\draw   (191,80) -- (231,80) -- (231,250) -- (191,250) -- cycle ;
\draw  [dash pattern={on 4.5pt off 4.5pt}]  (111,80) -- (231,80) ;
\draw    (111,80) -- (111,130) ;
\draw [shift={(111,130)}, rotate = 270] [color={rgb, 255:red, 0; green, 0; blue, 0 }  ][line width=0.75]    (0,5.59) -- (0,-5.59)(10.93,-3.29) .. controls (6.95,-1.4) and (3.31,-0.3) .. (0,0) .. controls (3.31,0.3) and (6.95,1.4) .. (10.93,3.29)   ;
\draw [shift={(111,80)}, rotate = 90] [color={rgb, 255:red, 0; green, 0; blue, 0 }  ][line width=0.75]    (0,5.59) -- (0,-5.59)(10.93,-3.29) .. controls (6.95,-1.4) and (3.31,-0.3) .. (0,0) .. controls (3.31,0.3) and (6.95,1.4) .. (10.93,3.29)   ;
\draw    (90,130) -- (90,250) ;
\draw [shift={(90,250)}, rotate = 270] [color={rgb, 255:red, 0; green, 0; blue, 0 }  ][line width=0.75]    (0,5.59) -- (0,-5.59)(10.93,-3.29) .. controls (6.95,-1.4) and (3.31,-0.3) .. (0,0) .. controls (3.31,0.3) and (6.95,1.4) .. (10.93,3.29)   ;
\draw [shift={(90,130)}, rotate = 90] [color={rgb, 255:red, 0; green, 0; blue, 0 }  ][line width=0.75]    (0,5.59) -- (0,-5.59)(10.93,-3.29) .. controls (6.95,-1.4) and (3.31,-0.3) .. (0,0) .. controls (3.31,0.3) and (6.95,1.4) .. (10.93,3.29)   ;
\draw  [pattern=_ydkoqvyjt,pattern size=6pt,pattern thickness=0.75pt,pattern radius=0pt, pattern color={rgb, 255:red, 0; green, 0; blue, 0}] (121,180) -- (161,180) -- (161,250) -- (121,250) -- cycle ;
\draw  [fill={rgb, 255:red, 206; green, 206; blue, 206 }  ,fill opacity=1 ] (191,80) -- (231,80) -- (231,150) -- (191,150) -- cycle ;
\draw  [dash pattern={on 4.5pt off 4.5pt}]  (120,130) -- (240,130) ;
\draw  [dash pattern={on 4.5pt off 4.5pt}]  (191,150) -- (240,150) ;
\draw    (250,100) -- (250,130) ;
\draw [shift={(250,130)}, rotate = 270] [color={rgb, 255:red, 0; green, 0; blue, 0 }  ][line width=0.75]    (0,5.59) -- (0,-5.59)(10.93,-3.29) .. controls (6.95,-1.4) and (3.31,-0.3) .. (0,0) .. controls (3.31,0.3) and (6.95,1.4) .. (10.93,3.29)   ;
\draw    (250,150) -- (250,180) ;
\draw [shift={(250,150)}, rotate = 90] [color={rgb, 255:red, 0; green, 0; blue, 0 }  ][line width=0.75]    (0,5.59) -- (0,-5.59)(10.93,-3.29) .. controls (6.95,-1.4) and (3.31,-0.3) .. (0,0) .. controls (3.31,0.3) and (6.95,1.4) .. (10.93,3.29)   ;

\draw (132,262.4) node [anchor=north west][inner sep=0.75pt]    {$A_{1}$};
\draw (202,262.4) node [anchor=north west][inner sep=0.75pt]    {$A_{2}$};
\draw (98,92.4) node [anchor=north west][inner sep=0.75pt]    {$\delta $};
\draw (49,192.4) node [anchor=north west][inner sep=0.75pt]  [font=\footnotesize]  {$v_{1}( A_{1})$};
\draw (103,212.4) node [anchor=north west][inner sep=0.75pt]    {$C$};
\draw (207,100.4) node [anchor=north west][inner sep=0.75pt]    {$\hat{g}$};
\draw (240,130.4) node [anchor=north west][inner sep=0.75pt]  [font=\footnotesize]  {$v_{2}( \ \hat{g} \ ) -\delta $};

\end{tikzpicture}
        \label{fig:fig2a}
    \end{subfigure}
    \hfill
    \begin{subfigure}[t]{0.48\textwidth}
        \centering
 
\tikzset{
pattern size/.store in=\mcSize, 
pattern size = 5pt,
pattern thickness/.store in=\mcThickness, 
pattern thickness = 0.3pt,
pattern radius/.store in=\mcRadius, 
pattern radius = 1pt}
\makeatletter
\pgfutil@ifundefined{pgf@pattern@name@_p4sf3bj7a}{
\pgfdeclarepatternformonly[\mcThickness,\mcSize]{_p4sf3bj7a}
{\pgfqpoint{0pt}{-\mcThickness}}
{\pgfpoint{\mcSize}{\mcSize}}
{\pgfpoint{\mcSize}{\mcSize}}
{
\pgfsetcolor{\tikz@pattern@color}
\pgfsetlinewidth{\mcThickness}
\pgfpathmoveto{\pgfqpoint{0pt}{\mcSize}}
\pgfpathlineto{\pgfpoint{\mcSize+\mcThickness}{-\mcThickness}}
\pgfusepath{stroke}
}}
\makeatother
\tikzset{every picture/.style={line width=0.75pt}} 

\begin{tikzpicture}[x=0.75pt,y=0.75pt,yscale=-1,xscale=1]

\draw   (141,131) -- (181,131) -- (181,251) -- (141,251) -- cycle ;
\draw   (211,150) -- (251,150) -- (251,251) -- (211,251) -- cycle ;
\draw  [pattern=_p4sf3bj7a,pattern size=6pt,pattern thickness=0.75pt,pattern radius=0pt, pattern color={rgb, 255:red, 0; green, 0; blue, 0}] (141,181) -- (181,181) -- (181,251) -- (141,251) -- cycle ;
\draw  [fill={rgb, 255:red, 206; green, 206; blue, 206 }  ,fill opacity=1 ] (141,61) -- (181,61) -- (181,131) -- (141,131) -- cycle ;
\draw    (272,100) -- (272,130) ;
\draw [shift={(272,130)}, rotate = 270] [color={rgb, 255:red, 0; green, 0; blue, 0 }  ][line width=0.75]    (0,5.59) -- (0,-5.59)(10.93,-3.29) .. controls (6.95,-1.4) and (3.31,-0.3) .. (0,0) .. controls (3.31,0.3) and (6.95,1.4) .. (10.93,3.29)   ;
\draw    (272,150) -- (272,180) ;
\draw [shift={(272,150)}, rotate = 90] [color={rgb, 255:red, 0; green, 0; blue, 0 }  ][line width=0.75]    (0,5.59) -- (0,-5.59)(10.93,-3.29) .. controls (6.95,-1.4) and (3.31,-0.3) .. (0,0) .. controls (3.31,0.3) and (6.95,1.4) .. (10.93,3.29)   ;
\draw    (120,60) -- (120,90) ;
\draw [shift={(120,60)}, rotate = 90] [color={rgb, 255:red, 0; green, 0; blue, 0 }  ][line width=0.75]    (0,5.59) -- (0,-5.59)(10.93,-3.29) .. controls (6.95,-1.4) and (3.31,-0.3) .. (0,0) .. controls (3.31,0.3) and (6.95,1.4) .. (10.93,3.29)   ;
\draw    (120,110) -- (120,130) ;
\draw [shift={(120,130)}, rotate = 270] [color={rgb, 255:red, 0; green, 0; blue, 0 }  ][line width=0.75]    (0,5.59) -- (0,-5.59)(10.93,-3.29) .. controls (6.95,-1.4) and (3.31,-0.3) .. (0,0) .. controls (3.31,0.3) and (6.95,1.4) .. (10.93,3.29)   ;
\draw  [dash pattern={on 4.5pt off 4.5pt}]  (141,61) -- (320,60) ;
\draw  [dash pattern={on 4.5pt off 4.5pt}]  (141,131) -- (280,130) ;
\draw  [dash pattern={on 4.5pt off 4.5pt}]  (211,150) -- (330,150) ;
\draw    (320,60) -- (320,90) ;
\draw [shift={(320,60)}, rotate = 90] [color={rgb, 255:red, 0; green, 0; blue, 0 }  ][line width=0.75]    (0,5.59) -- (0,-5.59)(10.93,-3.29) .. controls (6.95,-1.4) and (3.31,-0.3) .. (0,0) .. controls (3.31,0.3) and (6.95,1.4) .. (10.93,3.29)   ;
\draw    (320,120) -- (320,150) ;
\draw [shift={(320,150)}, rotate = 270] [color={rgb, 255:red, 0; green, 0; blue, 0 }  ][line width=0.75]    (0,5.59) -- (0,-5.59)(10.93,-3.29) .. controls (6.95,-1.4) and (3.31,-0.3) .. (0,0) .. controls (3.31,0.3) and (6.95,1.4) .. (10.93,3.29)   ;

\draw (152,263.4) node [anchor=north west][inner sep=0.75pt]    {$A'_{1}$};
\draw (222,263.4) node [anchor=north west][inner sep=0.75pt]    {$A'_{2}$};
\draw (123,213.4) node [anchor=north west][inner sep=0.75pt]    {$C$};
\draw (102,92.4) node [anchor=north west][inner sep=0.75pt]  [font=\footnotesize]  {$v_{1}( \ \hat{g} \ )$};
\draw (251,132.4) node [anchor=north west][inner sep=0.75pt]  [font=\scriptsize]  {$v_{2}( \ \hat{g} \ ) -\delta $};
\draw (302,93.4) node [anchor=north west][inner sep=0.75pt]  [font=\footnotesize]  {$v( \ \hat{g} \ ) -\delta $};

\end{tikzpicture}
        \label{fig:fig2b}
    \end{subfigure}
    \caption{Figure showing the agent values before and after transferring good $\hat{g}$. The compressing goods $C \subseteq A_1$, and $v_1(C) + v_2(C) \ge  v_2(\hat{g})-\delta$.}
    \label{fig:fig2}
\end{figure}
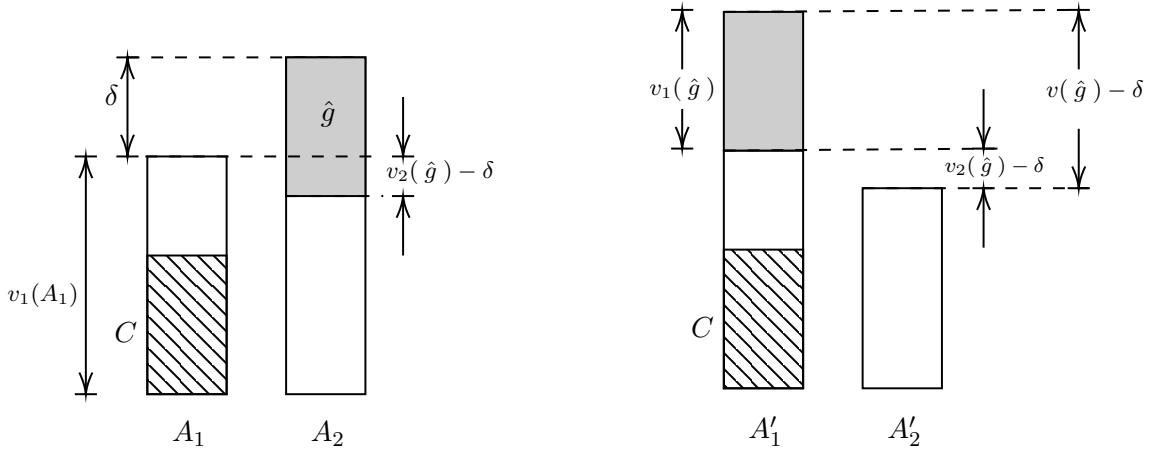

\begin{figure}[h]
    \centering
    \tikzset{every picture/.style={line width=0.75pt}} 

\begin{tikzpicture}[x=0.75pt,y=0.75pt,yscale=-1,xscale=1]

\draw   (90,110) -- (130,110) -- (130,230) -- (90,230) -- cycle ;
\draw   (160,60) -- (200,60) -- (200,230) -- (160,230) -- cycle ;
\draw  [dash pattern={on 4.5pt off 4.5pt}]  (80,60) -- (200,60) ;
\draw    (80,60) -- (80,110) ;
\draw [shift={(80,110)}, rotate = 270] [color={rgb, 255:red, 0; green, 0; blue, 0 }  ][line width=0.75]    (0,5.59) -- (0,-5.59)(10.93,-3.29) .. controls (6.95,-1.4) and (3.31,-0.3) .. (0,0) .. controls (3.31,0.3) and (6.95,1.4) .. (10.93,3.29)   ;
\draw [shift={(80,60)}, rotate = 90] [color={rgb, 255:red, 0; green, 0; blue, 0 }  ][line width=0.75]    (0,5.59) -- (0,-5.59)(10.93,-3.29) .. controls (6.95,-1.4) and (3.31,-0.3) .. (0,0) .. controls (3.31,0.3) and (6.95,1.4) .. (10.93,3.29)   ;
\draw    (220,60) -- (220,230) ;
\draw [shift={(220,230)}, rotate = 270] [color={rgb, 255:red, 0; green, 0; blue, 0 }  ][line width=0.75]    (0,5.59) -- (0,-5.59)(10.93,-3.29) .. controls (6.95,-1.4) and (3.31,-0.3) .. (0,0) .. controls (3.31,0.3) and (6.95,1.4) .. (10.93,3.29)   ;
\draw [shift={(220,60)}, rotate = 90] [color={rgb, 255:red, 0; green, 0; blue, 0 }  ][line width=0.75]    (0,5.59) -- (0,-5.59)(10.93,-3.29) .. controls (6.95,-1.4) and (3.31,-0.3) .. (0,0) .. controls (3.31,0.3) and (6.95,1.4) .. (10.93,3.29)   ;
\draw    (60,110) -- (60,230) ;
\draw [shift={(60,230)}, rotate = 270] [color={rgb, 255:red, 0; green, 0; blue, 0 }  ][line width=0.75]    (0,5.59) -- (0,-5.59)(10.93,-3.29) .. controls (6.95,-1.4) and (3.31,-0.3) .. (0,0) .. controls (3.31,0.3) and (6.95,1.4) .. (10.93,3.29)   ;
\draw [shift={(60,110)}, rotate = 90] [color={rgb, 255:red, 0; green, 0; blue, 0 }  ][line width=0.75]    (0,5.59) -- (0,-5.59)(10.93,-3.29) .. controls (6.95,-1.4) and (3.31,-0.3) .. (0,0) .. controls (3.31,0.3) and (6.95,1.4) .. (10.93,3.29)   ;
\draw  [fill={rgb, 255:red, 206; green, 206; blue, 206 }  ,fill opacity=1 ] (160,150) -- (200,150) -- (200,190) -- (160,190) -- cycle ;

\draw (101,242.4) node [anchor=north west][inner sep=0.75pt]    {$A_{1}$};
\draw (171,242.4) node [anchor=north west][inner sep=0.75pt]    {$A_{2}$};
\draw (67,72.4) node [anchor=north west][inner sep=0.75pt]    {$\delta $};
\draw (229,132.4) node [anchor=north west][inner sep=0.75pt]  [font=\footnotesize]  {$v_{2}( A_{2})$};
\draw (19,172.4) node [anchor=north west][inner sep=0.75pt]  [font=\footnotesize]  {$v_{1}( A_{1})$};
\draw (177,160.4) node [anchor=north west][inner sep=0.75pt]    {$\hat{g}$};

\end{tikzpicture}
    \caption{\centering An \EQX{} allocation $A$ where $v_2(A_2) - v_1(A_1) = \delta$.}
    \label{fig:fig1}
\end{figure}
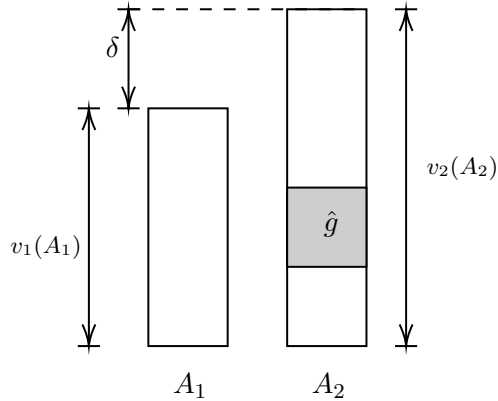

We now consider two subcases based on whether the transfer of the entire set \(C\) keeps agent \(1\) as the richer agent, or if at some point during the transfer of goods from \(C\), the transfer of a single good \(s^*\) causes agent \(1\) to no longer be the richer agent.

\vspace{0.2cm}
\noindent \textbf{Case 2.1:} The entire set \(C\) is transferred to \(A'_2\), and agent \(1\) remains the richer agent. In this case, we claim that the resulting allocation $A' = (A_1', A_2')$ is an \EQ{1} allocation (and hence, $A'$ is a 1-biased \EQ{1} allocation).


Note that $A_1' = A_1 \cup \hat{g} \setminus C$, and $A_2' = A_2 \cup C \setminus \hat{g}$. Then


\begin{align*}
v_1(A_1' \setminus \{\hat{g}\}) &= v_1(A_1) - v_1(C) && \text{(since $A_1' = A_1 \cup \hat{g} \setminus C$)}\\
&= v_2(A_2) - \delta -v_1(C) &&\text{(since \(v_1(A_1 = v_2(A_2)-\delta\))} \\
& \le v_2(A_2) - (v_2(\hat{g})-v_2(C)) &&\text{(from~\eqref{eq:eq2})}\\
&=v_2(\hat{A}_2) && \text{(since $A_2' = A_2 \cup C \setminus \hat{g}$).}
\end{align*}

Therefore, if all goods in $C$ are transferred from $A_1$ to $A_2$, the resulting allocation \(A'\) is a \(1\)-biased \EQ{1} allocation, and is returned in Line~\ref{line:alltransferred}. 

\vspace{0.2cm}
\noindent \textbf{Case 2.2:} There exists a good \(s^* \in C\) such that transferring \(s^*\) from \(A'_1\) to \(A'_2\) causes agent 1 to cease being the richer agent. That is, if \(A'_1\) and \(A'_2\) are the bundles just before transferring \(s^*\), then $v_1(A_1') \ge v_2(A_2')$, and $v_1(A_1' \setminus \{s^*\} < v_2(A_2' \cup \{s^*\})$. 

Clearly, if \(v_1(A'_1\setminus{s^*}) \le v_2(A'_2)\), then \((A'_1,A'_2)\) is the required 1-biased \EQ{1} allocation, and this is the allocation returned in Line~\ref{line:alltransferred}. Thus, consider the case when 
\begin{equation}
    \label{eq:assum}
    v_1(A'_1\setminus{s^*})> v_2(A'_2) \, .
\end{equation}

Then, let \(A_1''=A'_2\cup s^*\) and \(A_2''=A'_1\setminus \{s^*\}\). That is, transfer \(s^*\) and then swap the bundles. We claim that this is the required \(1\)-biased \EQ{1} allocation. 

Firstly, to show that \(1\) is the richer agent:
\begin{align*}
v_1(A_1'')-v_2(A_2'') &=v_2(A_1'')-v_1(A_2'') &&\text{(from~\eqref{eq:balance})}\\
&=v_2(A'_2\cup s^*) - v_1(A'_1\setminus \{s^*\}) > 0 &&\text{(from~\eqref{eq:assum}).}
\end{align*}

Now, consider the removal of \(s^*\) from \(A_1''\):

\begin{align*}
v_1(A_1''\setminus s^*)-v_2(A_2'')
&=v_1(A_2')-v_2({A'}_1\setminus s^*) \\
&=v_1(A'_2\cup s^*) -v_2(A'_1\setminus s^*) -  v_1(s^*) \\
&= v_2(A'_2\cup s^*) -v_1(A'_1\setminus s^*) -v_1(s^*) &&\text{(by~\eqref{eq:balance})}\\
&= v_2(A'_2) -v_1(A'_1\setminus s^*) + v_2(s^*)-v_1(s^*)\\
&\le v_2(A'_2) -v_1(A'_1\setminus s^*) &&\text{(since \(s^*\in C\))}\\
&\le 0 &&\text{(from~\eqref{eq:assum}).}
\end{align*}

Therefore, \(A''\) is the required \(1\)-biased \EQ{1} allocation, which is returned in Line~\ref{line:s-star-end}. Furthermore, since each good is transferred at most once, this procedure terminates in \(\mathcal{O}(m)\) time.
\end{proof}

\begin{algorithm}[ht]
    \caption{\textsc{Obtain a \(1\)-biased \EQ{1} Allocation}}
    \label{alg:1biased-eq1}
    
    \hspace*{\algorithmicindent} \textbf{Input}: A normalised fair division instance \(\mathcal{I} = \langle \{1,2\}, M, \mathcal{V} \rangle\).\\
    \hspace*{\algorithmicindent} \textbf{Output}: A \(1\)-biased \EQ{1} allocation.
    \begin{algorithmic}[1]
    \State Compute an \(\EQX{}\) allocation \(A = (A_1, A_2)\) \Comment{Can be done in \(\mathcal{O}(m)\) time~\cite{GMT14near}}
    \If{\(v_1(A_1) \ge v_2(A_2)\)}
        \State \Return \((A_1, A_2)\) \label{line:returneqx}
    \EndIf
    
    \State Let \(\delta \gets v_2(A_2) - v_1(A_1)\)
    
    \If{there exists \(\hat{g} \in A_2\) such that \(v_1(\hat{g}) \ge \delta\)}\label{step:s1}
        \State \Return \((A_2, A_1)\) \Comment{Swap the bundles} \label{line:returnswap}
    \EndIf
    
    \State \(C \gets \{g \in M \mid v_1(g) \ge v_2(g)\}\) \Comment{Set of all \emph{compressing} goods; $C \subseteq A_1$}
    \State Choose arbitrary \(\hat{g} \in A_2\)
    \State \(A_1' \gets A_1 \cup \{\hat{g}\},\quad A_2' \gets A_2 \setminus \{\hat{g}\}\) \Comment{$v_1(A_1' \setminus \{\hat{g}\}) \ge v_2(A_2')$}
    
    \For{each \(s \in C\)}\label{step:begin_for}
        \If{\(v_1(A_1' \setminus \{s\}) \ge v_2(A_2' \cup \{s\})\)}
            \State \(A_1' \gets A_1' \setminus \{s\},\quad A_2' \gets A_2' \cup \{s\}\) \Comment{Transfer the good \(s\)}
        \Else
            \State \(s^\ast \gets s\)
            \State \textbf{break}
        \EndIf
    \EndFor\label{step:end_for}
    
    \If{\((A_1', A_2')\) is \(\EQ{1}\)}
        \State \Return \((A_1', A_2')\) \label{line:alltransferred}
    \Else\label{line:s-star-start}
        \State \(A_2'' \gets A_1' \setminus \{s^\ast\},\quad A_1'' \gets A_2' \cup \{s^\ast\}\) \Comment{Transfer \(s\) and then swap the bundles}
        \State \Return \((A_1'', A_2'')\) \label{line:s-star-end}
    \EndIf
    \end{algorithmic}
    \end{algorithm}

An \EQ{} + \EQX{} allocation for two agents may however not exist, even with normalised valuations and just three items. Consider the instance shown in \Cref{fig:twoagentseqx}. It can be verified that the only \EQX{} allocation $A$ is when agent 1 gets $A_1 = \{g_3\}$ and agent 2 gets $A_2 = \{g_1, g_2\}$. But then $v_1(A_1) = 5$, $v_2(A_2) = 7$, and there is no distribution over \EQX{} allocations that gives the agents equal utility in expectation.




\begin{figure}[H]
\centering
\begin{tabular}{c|ccc}
      & $g_1$ & $g_2$ & $g_3$ \\ \hline
$1$ & 1     & 3     & 5     \\
$2$ & 4     & 3    & 2    \\ 
\end{tabular}
\caption{An instance demonstrating non-existence of \EQ{} + \EQX{} allocations for two agents.}
\label{fig:twoagentseqx}
\end{figure}

\section{Binary Valuations}
\label{sec:binary}

We now consider instances where agents have binary valuations. As mentioned, it is not difficult to obtain an \EQ{} + \EQ{1} allocation in this case, where goods are assigned to agents that have value 0 for them. However such an allocation is clearly wasteful.

Instead, we show a stronger result. For an instance $\mathcal{I}$, let \textsc{OPT} be the maximum social welfare (i.e., the total value of the agents) in any \EQ{} allocation (possibly fractional). We show that in fact there exists an \EQ{} + \EQ{1} allocation where the social welfare is \textsc{OPT}. Thus, the restriction that the ex ante \EQ{} allocation is supported on \EQ{1} allocations does not impose any cost on the social welfare.

\begin{theorem}
\label{thm:bobw_binary}
Given a fair division instance with binary valuations, an \EQ{} $+$ \EQ{1} allocation $X$ that obtains maximum utilitarian social welfare over all fractional \EQ{} allocations exists, and can be computed efficiently.
\end{theorem}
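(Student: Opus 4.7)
The plan is to compute the maximum-welfare fractional EQ allocation via a linear program, and then invoke the bihierarchical rounding theorem of Budish, Che, Kojima, and Milgrom~\cite{10.1257/aer.103.2.585} to decompose the LP optimum into a convex combination of integral EQ1 allocations.

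First, I would define $G_i = \{g \in M : v_i(g) = 1\}$ and consider the LP over variables $A_{i,g} \in [0,1]$ and $u \ge 0$ that maximizes $n u$ subject to $\sum_{g \in G_i} A_{i,g} = u$ for every $i \in N$, and $\sum_{i \in N} A_{i,g} = 1$ for every $g \in M$. Any feasible $(A, u)$ is a fractional EQ allocation with common utility $u$ and welfare $n u$. Let $(A^*, u^*)$ be an optimum; then $n u^*$ equals the maximum social welfare over all fractional EQ allocations.

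Next, I would verify that the equality constraints of this LP form a bihierarchy on the ground set $N \times M$. Define
\[
    H_1 = \bigl\{\{(i,g) : i \in N\} : g \in M\bigr\}, \qquad H_2 = \bigl\{\{(i,g) : g \in G_i\} : i \in N\bigr\},
\]
corresponding to the per-good assignment constraints (each with quota $1$) and the per-agent utility constraints (each with quota $u^*$), respectively. Each family is a partition of a subset of $N \times M$, hence laminar, so $(H_1, H_2)$ is a bihierarchy in the sense of~\cite{10.1257/aer.103.2.585}. Applying the decomposition theorem then yields $A^* = \sum_{k=1}^{\ell} p_k A^k$ where each $A^k \in \{0,1\}^{N \times M}$ is an integral matrix, and for every constraint the integer value lies in the interval between the floor and ceiling of its fractional value. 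Since the per-good quota $1$ is already integral, each $A^k$ is a valid integral allocation; and for each agent $i$, $\sum_{g \in G_i} A^k_{i,g} \in \{\lfloor u^* \rfloor, \lceil u^* \rceil\}$.

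Finally, since valuations are binary, $v_i(A^k_i) = \sum_{g \in G_i} A^k_{i,g}$, so in each $A^k$ every agent's utility is either $\lfloor u^* \rfloor$ or $\lceil u^* \rceil$. If $u^*$ is an integer the allocation is already EQ; otherwise any richer agent (utility $\lceil u^* \rceil$) holds at least one good of value $1$, whose removal lowers its utility to $\lfloor u^* \rfloor$, matching the poorer agents; hence $A^k$ is EQ1. The randomized allocation $X$ that draws $A^k$ with probability $p_k$ is therefore ex post EQ1, and its expectation is the ex ante EQ allocation $A^*$ with welfare $n u^*$, matching the fractional optimum. The main obstacle I anticipate is confirming that the LP's equality constraints genuinely fit the bihierarchy formalism of~\cite{10.1257/aer.103.2.585}, so that the rounding theorem applies; once that is in hand, the binary structure of the valuations is what converts the ``$\pm 1$'' integrality guarantee into precisely the EQ1 condition.
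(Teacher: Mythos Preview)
Your proposal is correct and follows essentially the same approach as the paper: solve the welfare-maximizing fractional \EQ{} LP, then apply the Budish--Che--Kojima--Milgrom bihierarchical decomposition to express the optimum as a convex combination of integral allocations whose per-agent utilities lie in $\{\lfloor u^*\rfloor,\lceil u^*\rceil\}$, which for binary valuations is exactly \EQ{1}. The only cosmetic difference is that the paper first passes to an auxiliary polytope with integral bounds $\lfloor w^*\rfloor,\lceil w^*\rceil$ before invoking the theorem (and explicitly folds the nonnegativity constraints into one of the two hierarchies), whereas you invoke the floor/ceiling guarantee directly; both routes are valid.
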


Note that this result does not require valuations to be normalised.

We first describe the main tool from~\cite{10.1257/aer.103.2.585} we use.
\footnote{Budish et al.~\cite{10.1257/aer.103.2.585} describe Theorem~\ref{thm:decompositionbudish} differently, in terms of assignment matrices and quotas. The description here is adapted to our notation, but the technical content is the same.} Given a binary matrix $G \in \{0,1\}^{n' \times m'}$, for each row $i \in [n']$, define the set $S_i = \{j: A_{ij} = 1\}$ as the columns with non-zero entries. Then a set $\mathcal{T} \subseteq [n']$ is \emph{hierarchical} (or laminar) if for any $i, i' \in \mathcal{T}$, the sets $S_i$ and $S_{i'}$ are either disjoint, or one is contained in the other. The matrix $G$ is \emph{bihierarchical} if the set $[n']$ can be partitioned into $\mathcal{T}_1$ and $\mathcal{T}_2$ so that both $\mathcal{T}_1$ and $\mathcal{T}_2$ are hierarchical.





\begin{theorem}[\cite{10.1257/aer.103.2.585}]
\label{thm:decompositionbudish} Given a binary matrix $G \in \{0,1\}^{n' \times m'}$ and integral vectors $\bar{q}, \ubar{q} \in \mathbb{Z}^{n'}$ such that $G$ is bihierarchical, if the polytope $\{x \in \mathbb{R}^{n'} : \ubar{q} \le Gx \le \bar{q} \}$ is feasible, then it is integral. Further, any fractional solution $x$ can be decomposed into a convex combination of integral solutions in strongly polynomial time. 
\end{theorem}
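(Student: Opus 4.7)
The plan is to prove both claims of the theorem simultaneously by a constructive argument: given any fractional feasible $x$, exhibit feasible $x^+, x^-$ with strictly fewer fractional coordinates than $x$ and a $\lambda \in (0,1)$ with $x = \lambda x^+ + (1-\lambda) x^-$. Iterating at most $O(m')$ times produces an integer decomposition, and applying the same argument to an arbitrary vertex shows every vertex of the polytope must be integral. A natural alternative is to prove $G$ is totally unimodular via Ghouila--Houri and then invoke Hoffman--Kruskal, but verifying the required sign partition for bihierarchical $G$ reduces to essentially the same combinatorial work, so I would follow the constructive route that directly yields the algorithm.

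The core object is a \emph{rounding direction} $d \in \mathbb{R}^{m'}$ supported on the fractional coordinates $J := \{j : x_j \notin \mathbb{Z}\}$ that preserves every currently tight row, i.e., $\sum_{j \in S} d_j = 0$ for every $S \in \mathcal{T}_1 \cup \mathcal{T}_2$ with $\sum_{j \in S} x_j \in \{\ubar{q}_S, \bar{q}_S\}$. To find such a $d$, I would build a bipartite multigraph $H$ whose left vertices are the tight sets of $\mathcal{T}_1$ that contain at least one column of $J$, whose right vertices are the analogous sets of $\mathcal{T}_2$, and whose edges are the columns of $J$: each $j \in J$ connects its deepest tight $\mathcal{T}_1$-ancestor to its deepest tight $\mathcal{T}_2$-ancestor. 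Any cycle in $H$ then yields a $d$ supported on the edges of the cycle, with $d_j = \pm 1$ alternating along the cycle; laminar nesting forces the $\pm 1$ contributions inside every tight set $S$ to cancel in pairs, so $\sum_{j \in S} d_j = 0$ as required.

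The main obstacle is to guarantee that $H$ contains a cycle, i.e., $|E(H)| \geq |V(H)|$. This is where the bihierarchy hypothesis is used in an essential way. The codimension of the face of the polytope on which $x$ lies is bounded by the number of independent tight rows across both hierarchies, which is at most $|V(H)|$; meanwhile $|E(H)| = |J|$, and whenever $x$ has a fractional coordinate this face has positive dimension, forcing $|J|$ to strictly exceed the codimension. A careful double count across the two laminar forests---accounting for ancestor-descendant relations between nested tight sets and the way each fractional coordinate contributes to exactly one pair of forest leaves---converts the rank inequality into the desired $|E(H)| > |V(H)|$. I expect this counting step, reconciling the rank of the stacked bihierarchical system with the combinatorics of the two forests, to be the technically delicate heart of the proof.

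Given $d$, choose $\epsilon^\pm > 0$ maximal such that $x^\pm := x \pm \epsilon^\pm d$ remain feasible; at each of $x^+$ and $x^-$ either a coordinate becomes integral or a previously loose row becomes tight, so an appropriate progress measure (e.g., number of fractional coordinates plus number of loose rows) strictly decreases. Each iteration is dominated by maintaining the restricted laminar forests and detecting a cycle in $H$, both strongly polynomial, so the full decomposition runs in strongly polynomial time. Integrality of the polytope is then immediate: an extreme point of the polytope admits no such direction $d$, and by the same analysis this is only possible when $J = \emptyset$, i.e., $x \in \mathbb{Z}^{m'}$.
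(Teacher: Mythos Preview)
The paper does not prove this theorem; it is quoted from Budish et al.\ and used as a black box in the proof of \Cref{thm:bobw_binary}. There is therefore nothing in the paper to compare your proposal against.

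On its own merits, your plan is the standard constructive argument (the one in the cited paper, which in turn generalises Birkhoff--von~Neumann) and is essentially correct. Two loose ends are worth flagging. First, the cycle-existence step: your rank/codimension phrasing is more fragile than necessary and, as you yourself note, is where the real work hides. The cleaner route is to restrict the vertex set of $H$ to those tight sets that actually occur as the \emph{deepest} tight ancestor of some $j\in J$; for any such $S\in\mathcal T_1$, the edges incident to $S$ are precisely the fractional $j\in S$ lying outside every tight $\mathcal T_1$-child of $S$, and since $\sum_{j\in S}x_j$ minus the (integer) sums over those tight children is again an integer, this set of fractional coordinates has size at least $2$. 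Hence every genuine vertex of $H$ has degree $\ge 2$ and a cycle exists, with no appeal to face dimension needed. Second, you do not handle the case where some $j\in J$ has no tight ancestor in one of the two hierarchies; then the walk in $H$ may terminate at a ``free'' end rather than close into a cycle, but an alternating $\pm 1$ vector along a path with two free ends is just as good a direction $d$ as one along a cycle, since by construction no tight row touches the endpoints. With these two points made explicit, the iteration, the progress measure, and the strongly polynomial running time all go through.
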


We are now ready to prove the theorem.

\begin{proof}[Proof of \Cref{thm:bobw_binary}]
    We first write the following LP $L_1$ that maximizes the utilitarian social welfare among all fractional \EQ{} allocations. Note that in an \EQ{} allocation, each agent has the same welfare (captured by the variable $w$), and the social welfare is $n w$. Further, each $v_i(g) \in \{0,1\}$.
    
\begin{align}
    \max \quad & w \\
    \text{subject to:} \quad 
    & \label{constraint:two} \sum_{g=1}^{m} v_{i}(g) x_{ig} = w, \quad \forall i \in N \\
    & \label{constraint:three}  \sum_{i=1}^{n} x_{ig} = 1, \quad \forall g \in M \\
    & \label{constraint:four} x_{ig} \geq 0, \quad \forall i \in N, g \in M
\end{align}

\begin{restatable}{claim}{binaryfeasible}
The LP $L_1$ is feasible.
\label{clm:binaryfeasible}
\end{restatable}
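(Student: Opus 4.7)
The plan is to exhibit an explicit feasible solution to $L_1$ by a two-class construction over the goods. A first naive attempt would be to set $x_{ig} = 1/n$ for every $(i,g)$: this satisfies the column-sum constraint, but gives agent $i$ welfare $v_i(M)/n$, and since valuations are not assumed normalised, these values need not agree across agents, so the equity constraints \eqref{constraint:two} can fail. The fix is to route each good to an agent who does not value it whenever possible, and only spread the ``universally valued'' goods uniformly.

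Concretely, I would partition $M$ into
\[
    \mathcal{G}_1 \;=\; \{\, g \in M : v_i(g) = 1 \text{ for all } i \in N \,\}
    \quad\text{and}\quad M \setminus \mathcal{G}_1.
\]
For each $g \in M \setminus \mathcal{G}_1$, by definition some agent $i(g)$ has $v_{i(g)}(g) = 0$; set $x_{i(g),g} = 1$ and $x_{j,g} = 0$ for $j \neq i(g)$. For each $g \in \mathcal{G}_1$, set $x_{ig} = 1/n$ for every $i \in N$. Non-negativity and the column constraint $\sum_i x_{ig} = 1$ are immediate from the construction.

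It then remains to verify the equity constraint. Goods in $M \setminus \mathcal{G}_1$ contribute $0$ to the welfare of every agent, since their entire mass is placed on an agent who values them at $0$ (and we are in the binary regime, so $v_i(g) x_{ig} = 0$ whether $i = i(g)$ or not). Goods in $\mathcal{G}_1$ contribute exactly $1/n$ to every agent, because each such good is valued at $1$ by everyone and split equally. Hence every agent's welfare equals $|\mathcal{G}_1|/n$, and taking $w = |\mathcal{G}_1|/n$ satisfies \eqref{constraint:two} with equality. This yields a feasible point of $L_1$, which is all the claim asks for. There is no real obstacle here beyond noticing that binary valuations let one ``zero out'' any good with a single agent who is indifferent to it, reducing the balancing problem to the universally valued subclass $\mathcal{G}_1$, where uniform splitting trivially equalises welfare.
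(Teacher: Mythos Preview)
Your proof is correct and follows essentially the same construction as the paper: route each good with at least one zero-valuer entirely to such an agent, split the universally valued goods uniformly, and observe that every agent then has welfare $|\mathcal{G}_1|/n$. The paper's argument is identical in content and only slightly terser in presentation.
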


\begin{proof}
    Consider the following assignment. For all $g \in [m]$ such that there is an agent $i$ with $v_i(g)= 0$, assign $x_{ig}= 1$ and $x_{jg} = 0$ for all $j \neq i$. For all the remaining goods, we have that $v_i(g) = 1 ~\forall~ i \in N$. For all such $g$, assign $x_{ig} = \frac{1}{n}$. It is easy to see the corresponding allocation gives a utility of $\frac{1}{n} \cdot k$ to every agent, where $k$ is the number of items valued at $1$ by all the agents. So constraint (\ref{constraint:two}) is satisfied. Constraints (\ref{constraint:three}) and  (\ref{constraint:four}) are satisfied by construction. Thus LP $\mathcal{L}_1$ is feasible. 
\end{proof}

Let $X^*, w^*$ be an optimal solution to $L_1$. Clearly $X^*$ is a fractional \EQ{} allocation of maximum social welfare $w^*$. We will show that $X^*$ can be obtained as a distribution over \EQ{1} allocations, with the same expected welfare. For this, consider the polytope $P_2$, defined as the set of feasible solutions to the following linear constraints.


\begin{align}
    & \label{constraint:six} \lfloor w^* \rfloor \le \sum_{g=1}^m v_i(g) x_{ig} \leq \lceil w^* \rceil , \quad \forall i \in N \\
    \quad 
    & \label{constraint:eight}  \sum_{i=1}^{n} x_{ig} = 1, \quad \forall g \in M \\
    & \label{constraint:nine}  x_{ig} \geq 0, \quad \forall i \in N, g \in M
\end{align}

Comparing $L_1$ and $P_2$, for each agent $i \in N$, $\sum_g v_i(g) X^*_{ig} = w^* \in [\lfloor w^* \rfloor, \lceil w^* \rceil]$. Hence, $X^*$ is a feasible fractional solution to $P_2$. In order to apply Theorem~\ref{thm:decompositionbudish}, we need to show that the constraint matrix is bihierarchical. For this, we define the two hierarchical constraint sets $\mathcal{T}_1$ and $\mathcal{T}_2$ as follows. Let $\mathcal{T}_1$ contain the welfare constraints~(\ref{constraint:six}). Note that each variable $x_{ig}$ appears at most once in the constraints~(\ref{constraint:six}), hence these sets are clearly laminar (and in fact, any two sets are disjoint). Let $\mathcal{T}_2$ contain the assignment constraints~(\ref{constraint:eight}) and nonnegativity constraints~(\ref{constraint:nine}). Then again, in the assignment constraints, each variable appears exactly once, hence these sets are disjoint. The nonnegativity constraints each have size 1, and hence each set is contained by a set in an assignment constraint. Thus, $\mathcal{T}_2$ is laminar as well, and hence the constraint matrix is bihierarchical.

    We can now apply Theorem~\ref{thm:decompositionbudish}, and obtain that the fractional \EQ{} allocation $X^*$ can be decomposed in strongly polynomial time into a convex combination of integral allocations. Let $A^1$, $\ldots$, $A^\ell$ be these integral allocations with convex coefficients $p_1$, $\ldots$, $p_\ell$. Note that each integral allocation $A^{k}$ for $k \in [\ell]$ is an \EQ{1} allocation, since by the welfare constraints~(\ref{constraint:six}), each agent $i$ has value either $\lfloor w^* \rfloor$ or $\lceil w^* \rceil$.

    Finally, note that the expected welfare of the distribution where $A^k$ is drawn with probability $p_k$ is $w^*$. This is because for each $i \in N$ and $g \in M$, $X^*_{ig} = \sum_{k=1}^{\ell} p_k A^{k}_{ig}$, and hence for each agent $i$, the expected value is
    \[
    \sum_{k=1}^\ell p_k \sum_{g \in M} v_i(g) A^{k}_{ig} ~ =~  \sum_{g \in M} v_i(g) \sum_{k=1}^\ell p_k A^k_{ig} ~ = ~ \sum_{g \in M} v_i(g) X^*_{ig} = w^* \, .
    \]
 \end{proof}

\section{General Instances}
\label{sec:hardness}

We now discuss instances with more than $2$ agents and valuations beyond binary. We first show that, beyond binary valuations, the existence of \EQ{} + \EQ{1} allocations is not guaranteed, even in instances with just three agents and four items. The result holds true even if there are only two types of agents and two types of items. The instance is thus tight, in multiple regards: an \EQ{} + \EQ{1} allocation exists with two agents (\Cref{thm:bobw_n_2}), with three agents and three items (\Cref{prop:trivial}), with a single type of agent (i.e., when all agents are identical,~\Cref{prop:trivial}), and with a single type of good (for normalised valuations, this implies that each value $v_i(g)$ is the same).

\begin{theorem}
\label{thm:bobw_n_3}
    Given a fair division instance with $3$ agents, $4$ items, and normalised additive valuations, an \EQ{} $+$ \EQ{1} allocation may not exist. 
\end{theorem}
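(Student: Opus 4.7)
The plan is to exhibit a specific instance with $n=3$ and $m=4$ and invoke \Cref{thm:any_convex_comb} in its contrapositive: if one can find $\lambda\in\mathbb{R}^3$ with $\sum_i\lambda_i=0$ such that every \EQ{1} allocation $A$ satisfies $\sum_i\lambda_i v_i(A_i)<0$, then no \EQ{}$+$\EQ{1} allocation exists. I would take normalised valuations $v_1(g)=v_2(g)=3$ on every item and $v_3(g_1)=v_3(g_2)=v_3(g_3)=2$, $v_3(g_4)=6$, so that each agent has total value $12$, there are two agent types (agents $1,2$ versus $3$), and two item types ($g_1,g_2,g_3$ versus $g_4$). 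Choosing $\lambda=(1,1,-2)$, the task reduces to showing that $v_1(A_1)+v_2(A_2)<2v_3(A_3)$ in every \EQ{1} allocation.

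Because $v_1\equiv v_2$ is uniform and $g_1,g_2,g_3$ are indistinguishable to all agents, I would parametrise each allocation by $n_3=|A_3\cap\{g_1,g_2,g_3\}|$ and $f_3\in\{0,1\}$, an indicator of $g_4\in A_3$. Substituting $v_1(A_1)+v_2(A_2)=3(4-|A_3|)$ and $v_3(A_3)=2n_3+6f_3$ rewrites the target inequality as $7n_3+15f_3>12$, which holds precisely when $f_3=1$ or $n_3\ge 2$. The remaining work is to show that every \EQ{1} allocation sits in this region: if $f_3=0$ and $n_3\le 1$ then $|A_1|+|A_2|\ge 3$, so some $A_j$ ($j\in\{1,2\}$) has $|A_j|\ge 2$, and the \EQ{1} constraint $v_3(A_3)\ge v_j(A_j)-\max_{g\in A_j}v_j(g)=3(|A_j|-1)\ge 3$ contradicts $v_3(A_3)\le 2$; symmetrically, whenever $|A_3|\ge 3$ at least one of $A_1,A_2$ is empty, and the \EQ{1} constraint $0\ge v_3(A_3)-\max_g v_3(g)$ fails in every subcase. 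The surviving \EQ{1} profiles are then $(3,6,6)$, $(6,3,6)$, $(3,3,8)$, and $(3,3,4)$, each with $v_1+v_2-2v_3$ strictly negative, which completes the proof via the characterization.

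The main obstacle I expect is calibrating the valuations into a narrow feasible window, since \EQ{1} feasibility and the strict separation $v_1+v_2<2v_3$ pull in opposite directions: if $v_3(g_4)$ is too small the allocation $A_3=\{g_4\}$ loses \EQ{1} feasibility; if $v_3$ on the type-A items is too large then the singleton allocations $A_3=\{g_k\}$ become \EQ{1} and give $v_3(A_3)=2$ well below $(v_1+v_2)/2=3$; and if the type-A values are too small then $A_3=\{g_j,g_k\}$ ceases to separate strictly. The values $(2,2,2,6)$ against uniform $v_1\equiv v_2\equiv 3$ lie precisely in the intersection of these constraints, and once this window is located the finite case analysis is routine.
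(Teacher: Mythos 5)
Your proof is correct and follows essentially the same approach as the paper: exhibit a $3$-agent, $4$-item normalised instance with two agent types and two item types, then apply \Cref{thm:any_convex_comb} with a single separating $\lambda$ (you use $\lambda=(1,1,-2)$; the paper uses $\lambda=(1,-\tfrac12,-\tfrac12)$ on a different witness instance) after a finite case analysis showing every \EQ{1} allocation lies strictly on one side. Your instance $(3,3,3,3)/(3,3,3,3)/(2,2,2,6)$ checks out --- the only \EQ{1} value profiles are indeed $(6,3,6)$, $(3,6,6)$, $(3,3,8)$, $(3,3,4)$, all with $v_1+v_2-2v_3<0$ --- so the argument is sound.
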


\begin{proof} 
Consider the instance $\mathcal{I}$ in \Cref{fig:noBOBW} with $3$ agents and $4$ items. 

\begin{figure}[h]
\centering
\begin{tabular}{c|cccc}
      & $g_1$ & $g_2$     & $g_3$ & $g_4$ \\ \hline
$1$ & $1.4$   & $2.2$     & $2.2$ &  $2.2$  \\
$2$ & $5$  & $1$   & $1$ &  $1$     \\
$3$ & $5$  & $1$  & $1$ &  $1$     
\end{tabular}
\caption{An instance with normalised valuations where there is no \EQ{} + \EQ{1} allocation.}
\label{fig:noBOBW}
\end{figure}

To prove the theorem, we will show the claim that in every \EQ{1} allocation $A$, $v_1(A_1) - \frac{1}{2} \left(v_2(A_2) + v_3(A_3)\right) < 0$. It then follows from \Cref{thm:any_convex_comb} that there is no \EQ{} + \EQ{1} allocation for this instance.




To prove the claim, we first show that under any \EQ{1} allocation $A$, agent $1$ gets exactly one good. Since there are $4$ goods and $3$ agents, every agent must get at least $1$ good under $A$. Suppose $a_1$ gets two goods, hence $v_1(A_1) \ge 3.6$. Then agents $2$ and $3$ receive exactly one good. Since both cannot get good $g_1$, assume wlog that $v_2(A_2) = 1$. But then allocation $A$ cannot be \EQ{1}, since on removing any good, agent 1 still has value at least $1.4 > v_2(A_2)$. 

Hence, $|A_1| = 1$. If $A_1 = \{g_1\}$, then $A_2 \cup A_3 = \{g_2, g_3, g_4\}$ and $v_2(A_2) + v_3(A_3) = 3$. Hence $v_1(A_1) = 1.4 < 1.5 = (v_2(A_2) + v_3(A_3))/2$. Similarly, if $A_1 \in \{g_2,g_3, g_4\}$, then $v_1(A_1) = 2.2$, and $v_2(A_2) + v_3(A_3) = 7 > 2 \times v_1(A_1)$. In either case, the claim holds.
\end{proof}







\noindent We next show that even deciding whether an \EQ{} + \EQ{1} allocation exists is NP-hard.

\begin{restatable}{theorem}{bobwnphardweak}
   Given a fair division instance, deciding the existence of an \EQ{} $+$ \EQ{1} allocation is weakly NP-Complete, even for three agents.
   \label{thm:bobw-weak}
\end{restatable}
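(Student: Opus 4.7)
The result has two parts: membership in NP and weak NP-hardness. Membership follows almost immediately from the remark after \Cref{thm:any_convex_comb}: by Carath\'eodory's theorem, any \EQ{} $+$ \EQ{1} allocation can be written as a convex combination of at most $n+1$ \EQ{1} allocations, each described by an $n \times m$ integral matrix. So a succinct certificate consists of such a distribution, and a verifier can check in polynomial time that every support allocation is \EQ{1} and that the induced marginal gives equal expected utility to all agents. Thus the problem is in NP for every constant $n$ (and indeed for general $n$ with a mildly more careful argument, since the witness has size $O(n \cdot nm)$).

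For hardness, the plan is a reduction from \textsc{2-Partition}. Given integers $b_1, \ldots, b_m$ with $\sum_{i} b_i = 2T$, I would build an instance with $3$ agents and $m + O(1)$ items in which agents $2$ and $3$ have identical valuations. The construction would include a small set of ``anchor'' items whose values are chosen so that, in every \EQ{1} allocation, agent $1$ is forced to receive a prescribed bundle (or at least a bundle of prescribed value $U$); on top of this, the $m$ ``partition items'' are assigned a value $b_i$ (up to scaling) by agents $2$ and $3$, and a carefully chosen small value by agent $1$ so as not to interfere with the anchor structure. All valuations are finally shifted/scaled to satisfy normalisation. The target is that in every \EQ{1} allocation the anchor items pin down agent $1$, and the \EQ{1} slack between agents $2$ and $3$ translates into each receiving $T \pm \varepsilon$ from the partition items only if a near-balanced partition of the $b_i$'s exists.

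The forward direction is then easy: if $\{b_i\}$ admits a balanced partition, assigning it to agents $2$ and $3$ (and the anchor bundle to agent $1$) yields an allocation that is exactly \EQ{}, hence trivially \EQ{} $+$ \EQ{1}. For the reverse direction I would use the characterization of \Cref{thm:any_convex_comb} with the separating vector $\lambda = (2, -1, -1)$ (or its negation). Since agents $2$ and $3$ are symmetric, I would argue that if no balanced partition exists then in every \EQ{1} allocation $A$ the quantity
\[
    2 v_1(A_1) - v_2(A_2) - v_3(A_3)
\]
is strictly bounded away from $0$ with a fixed sign, which by \Cref{thm:any_convex_comb} precludes \EQ{} $+$ \EQ{1}.

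The main obstacle, and the part that will require the most care, is the numerical design of the anchor gadget. I need valuations that (i) force agent $1$'s bundle value in any \EQ{1} allocation, (ii) keep the instance normalised with polynomially many bits (so that the reduction is pseudo-polynomial in the \textsc{2-Partition} parameters, preserving weak NP-hardness), and (iii) make the $\lambda = (2,-1,-1)$ separation tight enough that the gap closes exactly at a balanced partition and strictly fails otherwise. A template similar to the non-existence instance of \Cref{thm:bobw_n_3}, scaled by the $b_i$'s, is a promising starting point; the detailed choice of the anchor value and the small ``noise'' that agent $1$ sees on the partition items is the delicate step.
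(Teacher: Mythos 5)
Your high-level strategy coincides with the paper's: a reduction from \textsc{2-Partition} to a $3$-agent instance in which agents $2$ and $3$ are identical, the partition items carry the values $b_i$ for agents $2,3$ and a uniform value for agent $1$, a constant number of auxiliary items completes the gadget, and the reverse direction is closed by exhibiting, via \Cref{thm:any_convex_comb}, a vector proportional to $(2,-1,-1)$ for which every \EQ{1} allocation has $2v_1(A_1)-v_2(A_2)-v_3(A_3)<0$. The NP-membership argument via Carath\'eodory is also the paper's. So the skeleton is right.

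However, there is a genuine gap: the entire technical content of the theorem is the ``anchor gadget,'' and you have not constructed it. You cannot defer this --- without concrete valuations there is no way to verify either direction, and the design is not routine. Two specific points of concern. First, your forward direction envisions a \emph{single integral} exactly-\EQ{} allocation in the yes case. The paper's gadget does not (and arguably cannot easily) deliver this: it uses two auxiliary items $d_1,d_2$ with $v_1(d_1)=4T$, $v_1(d_2)=T$, $v_{2}(d_1)=v_3(d_1)=5T$, $v_2(d_2)=v_3(d_2)=(m-2)T$, and in the yes case the \EQ{} allocation is genuinely randomized --- $d_2$ and the two partition classes give every agent value exactly $T$, and the leftover item $d_1$ is then rotated among the three agents with probabilities $5/13,4/13,4/13$ to equalize expectations at $33T/13$. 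Each support allocation is only \EQ{1}, not \EQ{}. A gadget built to admit a single integral \EQ{} allocation in the yes case tends to be far too permissive in the no case: with only one dummy item (as in the simpler construction used for $i$-biased hardness), the set of \EQ{1} allocations is large and it is not true that they all lie strictly on one side of the hyperplane, so the reverse direction fails. Second, your claim that the anchors ``pin down agent $1$'s bundle'' in every \EQ{1} allocation is not how the argument actually goes; the paper instead performs a case analysis on which agent receives $d_1$ and which receives $d_2$, and shows (using $m\ge 20$ to bound how many partition items agent $1$ can absorb) that in every case of a no-instance $v_1(A_1)<\tfrac12\bigl(v_2(A_2)+v_3(A_3)\bigr)$. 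Until you specify numbers for which both the randomized forward direction and this exhaustive case analysis go through, the proof is a plan rather than a proof.
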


\begin{proof}
    We exhibit a reduction from \textsc{2-partition}. In this problem, we are given $S = \{b_1, b_2, \ldots b_m\}$ a set of
    $m$ integers with $\sum_{i=1}^m b_i = 2T$. We assume that $m \geq 20$. The task is to decide if there is a partition of the indices $[m]$ into two subsets $S_1$ and $S_2$ such that the sum of the numbers in both partitions equals $T$. That is, $[m] = S_1 \cup S_2$ and $\sum_{i \in S_1} b_i = \sum_{i \in S_2}b_i = T$. Given such an instance of \textsc{2-partition}, we construct a fair division instance as follows. We create $3$ agents, $m$ items $\{g_1, \ldots g_m\}$ (called \emph{partition} items) and two additional items $d_1$ and $d_2$. The valuations are depicted in \Cref{figure:weakhardnessEQEQ1}. Note that the valuations are normalised, and every agent values the grand bundle at $(m+5)T$.

    \begin{figure}[h]
    \centering
    \begin{tabular}{c|ccccccc}
          & $g_1$ & $g_2$     & $\ldots$ & $g_{m-1}$ & $g_m$ & $d_1$ & $d_2$ \\ \hline
    1 & $T$   & $T$     & $\ldots$ &  $T$ & $T$ & $4T$ & $T$ \\
    2 & $b_1$  & $b_2$   & $\ldots$ & $b_{m-1}$ & $b_m$ & $5T$ & $(m-2)T$    \\
    3 & $b_1$  & $b_2$  & $\ldots$ & $b_{m-1}$ & $b_m$ & $5T$ & $(m-2)T$     
    \end{tabular}
    \caption{\centering Reduction for the proof of \Cref{thm:bobw-weak}.}
    \label{figure:weakhardnessEQEQ1}
    \end{figure}
    
    This completes the construction. We now argue the equivalence of the reduction. 
    
    \vspace{0.2cm}
    \noindent \textbf{Forward Direction.} Suppose the given instance is a `yes' instance of 2-partition, and $S_1$ and $S_2$ are the two required partitions. Then, consider the random allocation $X$, defined as follows. In $X$, agent 1 gets $\{d_2\}$, agent 2 gets $\{g_i\}_{i \in S_1}$, and agent 3 gets $\{g_i\}_{i \in S_2}$ with with probability $1$. The remaining item $d_1$ is allocated to agent $1$ with probability $\frac{5}{13}$, and to agents 2 and 3 with probability $\frac{4}{13}$ each.
    Then, $$\mathbb{E}[v_1(X_1)]= 1 \cdot v_1(d_2) +  \frac{5}{13}\cdot v_1(d_1) = 1 \cdot T + \frac{5}{13}\cdot 4T = \frac{33T}{13}$$
    
    $$\mathbb{E}[v_2(X_2)] = 1 \cdot v_2(\{g_i\}_{i \in S_1}) +  \frac{4}{13}\cdot v_2(d_1)  = 1 \cdot T + \frac{4}{13}\cdot 5T = \frac{33T}{13}$$
    
    $$\mathbb{E}[v_3(X_3)] =1 \cdot v_3(\{g_i\}_{i \in S_2}) +  \frac{4}{13}\cdot v_3(d_1) =  1 \cdot T + \frac{4}{13}\cdot 5T = \frac{33T}{13}$$
    
    Therefore, $\mathbb{E}[v_1(X_1)] = \mathbb{E}[v_2(X_2)] = \mathbb{E}[v_3(X_3)]$ and hence, $X$ is an ex-ante \EQ{} allocation.
    
    To see that $X$ is ex post \EQ{1}, note that the support of $X$ has $3$ integral allocations $\{A^1, A^2, A^3\}
    $, where $d_1$ is allocated to agent agent $k$ in the allocation $A^k$ for $k \in [3]$, and $d_2$, $\{g_i\}_{i \in S_1}$ and $\{g_i\}_{i \in S_2}$ are allocated to agents 1, 2, and 3 respectively in all the three allocations. 
    
    
    It is easy to see that all the three integral allocations $\{A
    ^k\}_{k \in [3]}$ are \EQ{1}. Indeed, under any allocation $A^k$, every agent has a utility at least $T$, and for any agent $i$, $v_i(A_i^k \setminus \{d_2\}) = T$.
    
    \vspace{0.2cm}
    \noindent \textbf{Reverse Direction.}
    Suppose the given instance of \textsc{2-partition} is a `no' instance, then we will show that there is no \EQ{} + \EQ{1} allocation for the reduced instance.
    
    To that end, we first show the following claim. 
    
    \begin{claim}
    \label{claim:no_instance_partition}
        If the given instance of \textsc{2-partition} is a `no' instance, then under every \EQ{1} allocation $A$ for the reduction, $v_1(A_1) < \frac{1}{2}v_2(A_2) + \frac{1}{2} v_3(A_3)$.
    \end{claim}
    
    \begin{proof}
    Since the given instance is a `no' instance, under any \EQ{1} allocation $A$, agent $1$ can not receive both $d_1$ and $d_2$.  Otherwise, $v_1(A_1 \setminus d_1) \geq T$ and consequently, the \EQ{1} property requires both agents $2$ and $3$ receive a utility of at least $T$, violating the fact that the given instance is a `no' instance.
    Therefore, agent $1$ can not receive both $d_1$ and $d_2$ in any \EQ{1} allocation. 
    
    We now argue that if $d_2$ is not allocated to agent $1$, then $v_1(A_1) < \frac{1}{2}v_2(A_2) + \frac{1}{2}v_3(A_3)$ and we are done. Indeed, suppose $d_2$ is allocated to agent 2 (the case when $d_2$ is allocated to agent 3 is symmetric).
    Consider the following cases depending on the allocation of $d_1$.
    
    \begin{enumerate}
    \item Suppose $d_1$ is allocated to agent $1$. Then, agent $3$ can derive a utility of at most $2T$ from the set of partition items $\{g_1, \ldots g_m\}$, and hence under any \EQ{1} allocation, agent $1$ can get at most one partition item. Therefore, since $m \geq 20$, we have $v_1(A_1) < 5T < \frac{1}{2}\left(v_2(A_2) + v_3(A_3)\right)$. 
    
    \item Otherwise, if $d_1$ is allocated to agent 3, then since at least partition item must go to agent 1, $v_3(A_3) < 7T$ and hence, agent $1$ cannot receive more than partition $7$ items. Therefore, since $m \geq 20$, we have $v_1(A_1) < 8T < \frac{1}{2}\left( v_2(A_2) + v_3(A_3)\right)$. 
    \end{enumerate}
    
    Therefore, if $d_2$ is allocated to either agent 2 or agent 3, we have $v_1(A_1) < \frac{1}{2}\left( v_2(A_2) + v_3(A_3)\right)$ and we are done. Hence, we can assume that $d_2$ is allocated to agent 1. 
    
    Since both $d_1$ and $d_2$ can not be allocated to agent 1, assume $d_1$ is allocated to agent 2 (the case when $d_1$ is allocated to agent 3 is symmetric).  Note that 
    agent 1 cannot get more than one partition item, otherwise, $v_1(A_1) \geq 3T$ and $v_3(A_3) < 2T$ and the allocation is not \EQ{1}. Therefore, agent $1$ gets at most one partition item. This implies that under any \EQ{1} allocation, $v_1(A_1) \leq 2T < \frac{1}{2}\left( v_2(A_2) + v_3(A_3) \right)$.
    Hence, the claim stands proved.
    \end{proof}
    
    By \Cref{claim:no_instance_partition}, under every \EQ{1} allocation $A$, we have $v_1(A_1) < \frac{1}{2}v_2(A_2) + \frac{1}{2} v_3(A_3)$ if the given instance of \textsc{2-partition} is a `no' instance, and hence in this case it follows from \Cref{thm:any_convex_comb} that there is no \EQ{} + \EQ{1} allocation. This completes the proof of weak NP-hardness.
\end{proof}

\begin{figure}[H]
\centering
\begin{tabular}{c|ccccccc}
      & $g_1$ & $g_2$     & $\ldots$ & $g_{m-1}$ & $g_m$ & $d_1$ & $d_2$ \\ \hline
$a_1$ & $T$   & $T$     & $\ldots$ &  $T$ & $T$ & $4T$ & $T$ \\
$a_2$ & $b_1$  & $b_2$   & $\ldots$ & $b_{m-1}$ & $b_m$ & $5T$ & $(m-2)T$    \\
$a_3$ & $b_1$  & $b_2$  & $\ldots$ & $b_{m-1}$ & $b_m$ & $5T$ & $(m-2)T$     
\end{tabular}
\caption{\centering Reduction for the proof of \Cref{thm:bobw-weak}.}
\label{fig:weakhardnessEQEQ1}
\end{figure}

When the number of agents is not constant, the problem is strongly NP-hard.

\begin{restatable}{theorem}{bobwnphardstrong}
\label{thm:strong_NPC}
    Given a normalised fair division instance, deciding the existence of an \EQ{} $+$ \EQ{1} allocation is strongly \(\NP\)-complete.
\end{restatable}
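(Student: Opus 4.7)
The plan is to reduce from \textsc{3-Partition}, which is strongly NP-complete. Given $3m$ positive integers $b_1,\dots,b_{3m}$ with $\sum_i b_i = mT$ and each $b_i \in (T/4,T/2)$, I would construct a fair division instance with $n=m+1$ agents: one ``designated'' agent (agent~$1$) and $m$ identical ``partition agents'' ($2,\dots,m+1$). The items consist of $3m$ ``partition goods'' $g_1,\dots,g_{3m}$ together with two ``dummy'' goods $d_1,d_2$ playing the same structural roles as in the weak-hardness construction of \Cref{thm:bobw-weak}: each partition agent values $g_i$ at $b_i$ and values $d_1$ highly (roughly $5T$), while agent~$1$ values $d_1$ slightly less (roughly $4T$) and $d_2$ at $T$. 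Valuations are chosen so that the instance is normalised, with all numerical values bounded by a polynomial in the input, as is required for strong NP-hardness. Membership in NP follows from the remark after \Cref{thm:any_convex_comb}: by Carath\'eodory's theorem, a polynomial-sized certificate is a distribution over at most $n+1$ EQ1 allocations.

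The correctness argument applies \Cref{thm:any_convex_comb} to the symmetric vector $\lambda$ with $\lambda_1=1$ and $\lambda_k = -1/m$ for each partition agent $k$, which satisfies $\sum_i \lambda_i=0$. Thanks to the symmetry among partition agents, it suffices to show that the given 3-Partition instance is a YES instance if and only if there exists an EQ1 allocation $A$ with $v_1(A_1) \geq \frac{1}{m}\sum_{k=2}^{m+1} v_k(A_k)$.

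For the forward direction, given a valid partition into triples $\{\mcT_k\}_{k=1}^m$ each summing to $T$, I would mirror the construction from the weak-hardness proof: build a support of $m+1$ EQ1 allocations $A^0,\dots,A^m$ in which agent~$1$ always holds $d_2$, partition agent~$k$ always holds its designated triple $\mcT_k$ (value $T$), and $d_1$ is rotated among the $m+1$ agents across the support. A suitable convex combination (with the probability on $A^0$ tuned exactly as in the weak-hardness proof) yields an allocation that is ex ante EQ and ex post EQ1.

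The main obstacle is the NO direction: if no valid 3-partition exists, one must show that every EQ1 allocation $A$ satisfies $v_1(A_1) < \frac{1}{m}\sum_{k\geq 2} v_k(A_k)$. The argument proceeds by case analysis on the allocation of $d_1$ and $d_2$. The cleanest case is when $d_1 \in A_1$: then the EQ1 constraint between agent~$1$ and each partition agent forces every partition agent to derive value at least $T$ from their partition-goods bundle, and since these sum to exactly $mT$, this amounts to a valid 3-partition, a contradiction. When $d_1$ is held by a partition agent, the large value gap combined with EQ1 forces agent~$1$'s total value to remain small relative to the average across partition agents. The delicate subcases arise when $d_2$ changes hands or when agent~$1$ holds several partition goods; here one must carefully bound how many partition goods can accumulate in any bundle under EQ1, using the restriction $b_i \in (T/4,T/2)$ to preclude alternative balancing strategies. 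Completing this case analysis---and in particular showing that the weighted inequality $v_1(A_1) - \frac{1}{m}\sum_{k\geq 2} v_k(A_k) < 0$ is strict in every subcase whenever no valid 3-partition exists---is the technical core of the proof and the principal difficulty.
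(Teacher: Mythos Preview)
Your reduction has a genuine gap in the NO direction. The ``cleanest case'' argument you give for $d_1\in A_1$ tacitly assumes $d_2\in A_1$ as well: only then does removing $d_1$ leave agent~$1$ with residual value $\geq T$, forcing every partition agent to receive at least $T$ in partition goods. If instead $d_2$ sits with a partition agent, removing $d_1$ from $A_1$ leaves agent~$1$ with value $\ell T$ (where $\ell$ is the number of partition goods she holds), which for $\ell=0$ imposes no constraint at all on the partition agents.

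Worse, with the values you sketch (agent~$1$: $v_1(d_1)\approx 4T$, $v_1(d_2)\approx T$, $v_1(g_i)\approx T$; partition agents: $v_k(d_1)\approx 5T$, $v_k(g_i)=b_i$, and $v_k(d_2)\approx 2mT$ by normalisation), the following allocation is \EQ{1} regardless of whether a 3-partition exists: give agent~$1$ only $d_1$ (value $4T$), give one partition agent~$j$ only $d_2$ (value $2mT$), and distribute all $3m$ partition goods among the remaining $m-1$ partition agents via greedy LPT (which is \EQX{}, hence \EQ{1}, among them). Each of these $m-1$ agents then has value roughly $mT/(m-1)\leq 2.5T$, so all pairwise \EQ{1} constraints (against agent~$1$ with $4T$ and agent~$j$ with $2mT$) are easily met. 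But now $v_1(A_1)=4T$ while the average over partition agents is $(2mT+mT)/m=3T$, so $v_1(A_1) > \tfrac{1}{m}\sum_{k\geq 2} v_k(A_k)$, and your chosen $\lambda$ fails to witness non-existence.

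The underlying issue is dilution: with $m$ partition agents rather than two, a dummy held by a single partition agent contributes only $O(T)$ to the average, whereas in the weak-hardness proof it contributes $\Omega(mT)$. The paper's construction is \emph{not} a direct lift of the weak-hardness values: it makes both dummies worth $\Theta(k^2 T)$ to the partition agents (so either dummy held by any partition agent lifts the average by $\Theta(kT)$), and simultaneously makes the designated agent value each partition good at $\Theta(kT)$, so she cannot compensate by hoarding partition goods without destroying \EQ{1}. Your parameters would need a comparable quadratic scaling on the dummies (and a corresponding rebalancing to keep the instance normalised) for the case analysis to close; as written, the NO direction does not go through.
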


\begin{proof}
    We present a reduction from the \textsc{3-partition} problem. For this problem, we are given $S = \{b_1, b_2, \ldots, b_m\}$, a set of $m$ integers, where \(m = 3k\) for some \(k\in \mathbb{Z}^+\) . The sum of the integers in \(S\) satisfies \(\sum_{i \in [m]} b_i = kT\). The objective of the \textsc{3-partition} problem is to partition \([m]\) into exactly \(k\) disjoint subsets, \(S=S_1 \uplus S_2 \uplus \cdots \uplus S_k\), such that \(\sum_{i\in S_j} b_i = T\) for all \(j\in [k]\). This problem is known to be strongly \(\NP\)-hard \cite[p.~96--105]{gareyjohnsonComputersintractability09}.\footnote{We consider the unrestricted output variant, where there is no constraint on the cardinality of each subset. This version is also known to be strongly \(\NP\)-hard \cite{gareyjohnsonComputersintractability09}.}
     
     Given such an instance of \textsc{3-partition}, we construct an equivalent fair division instance involving \(k+1\) agents and \(m+2\) goods. The valuations of the agents are specified in Table~\ref{tab:table_reduction_strong2}.
 
 \begin{figure}[h]
     \centering
     \renewcommand{\arraystretch}{1.3} 
     \begin{tabular}{c|ccccccc}
           & $g_1$ & $g_2$     & $\ldots$ & $g_{m-1}$ & $g_m$ & $d_1$ & $d_2$ \\
           \hline
     $0$ & $\frac{2m}{3}T$   & $\frac{2m}{3}T$     & $\ldots$ &  $\frac{2m}{3}T$ & $\frac{2m}{3}T$ & $T$ & $\left(\frac{m}{3}-1\right)T$ \\
     $1$ & $b_1$  & $b_2$   & $\ldots$ & $b_{m-1}$ & $b_m$ & $\frac{m^2}{3}T$ & $\frac{m^2}{3}T$    \\
     $2$ & $b_1$  & $b_2$  & $\ldots$ & $b_{m-1}$ & $b_m$ & $\frac{m^2}{3}T$ & $\frac{m^2}{3}T$    \\
     $\vdots$ & $\vdots$ & $\vdots$ & $\ddots$ & $\vdots$ & $\vdots$ & $\vdots$ & $\vdots$ \\
     $k$ & $b_1$  & $b_2$  & $\ldots$ & $b_{m-1}$ & $b_m$ & $\frac{m^2}{3}T$ & $\frac{m^2}{3}T$ \\
     \end{tabular}
     \caption{\centering Reduction for the proof of \Cref{thm:strong_NPC}.}
     \label{tab:table_reduction_strong2}
 \end{figure}
 
     First, suppose the instance of \textsc{3-partition} is a `yes' instance. We can construct \(k+1\) \EQ{1} allocations, \(A^0,A^1, A^2,\ldots,A^{k}\), as follows. In each allocation, for every \(j\in [k]\), the goods in \(S_j\) are assigned to agent \(j\), while \(d_1\) is allocated to agent \(0\). Additionally, in allocation \(A^i\), \(0\le i\le k\), we allocate good \(d_2\) to agent \(i\). In each allocation \(A^i\), all agents except agent \(i\) receive a utility of \(T\). If \(d_2\) is removed from agent \(i\)'s bundle, then agent \(i\)'s utility is also \(T\). Therefore, these allocations are \EQ{1}.
 
     To get an ex ante EQ allocation, we pick the allocation \(A^0\) with probability \(\frac{3m}{4m-3}\) and each \(A^i\) for $i \in [k]$ with probability \(\frac{3m-9}{4m^2-3m}\). 
     For each agent \(i \in [k]\), the expected utility is:
     \begin{align*}
         \mathbb{E}[v_i(X_i)] &=  T + \frac{3m-9}{4m^2-3m}\cdot\left(\frac{m^2}{3}T\right)\\
         &= T + \frac{m^2-3m}{4m-3}\cdot T\\
         &= T+ \frac{3m}{4m-3}\cdot\left(\frac{m}{3}-1\right)T\\
         &= \mathbb{E}[v_0(X_0)]
     \end{align*}
 
     Therefore, we have an \EQ{} +  \EQ{1} allocation.
 
     We now show that if there exists an \EQ{} + \EQ{1} allocation for the instance in the reduction, then the instance of \textsc{3-partition} is a `yes' instance. Let $X$ be an \EQ{} + \EQ{1} allocation for the instance. Then, from Theorem~\ref{thm:any_convex_comb}, there must exist an \EQ{1} allocation where \(v_0(X_0)\ge \frac{1}{k}\sum_{i=1}^k v_i(X_i)\). We now show that in all such \EQ{1} allocations, where \(v_0(X_0)\ge \frac{1}{k}\sum_{i=1}^k v_i(X_i)\), agent \(0\) must receive both \(d_2\) and \(d_1\).
 
     Consider some such allocation \(A\). Suppose some agent  other than agent \(0\), receives \(d_2\) in \(A\). Then, \(\frac{1}{k}\sum_{i=1}^k v_i(X_i) \ge \frac{1}{k}\cdot\frac{m^2}{3}T = mT\). Now agent \(0\) must receive at least two goods from \(\{g_1,\ldots,g_m\}\) so that \(v_0(X_0)\ge \frac{1}{k}\sum_{i=1}^k v_i(X_i)\) is satisfied. As these two goods are of utility \(\frac{2}{3}T\) each, to satisfy the \EQ{1} condition, each agent \(i \in [k]\) must receive a utility of at least \(\frac{2}{3}T\). Therefore, excluding the agents who receives the goods \(d_1\) and \(d_2\), there at at least \(k-2\) agents who must receive a utility of at least \(\frac{2}{3}T\) from the remaining \(m-2\) goods from \(\{g_1,
     \ldots, g_m\}\). This is not possible, as for any \(m>8\), we have \(\frac{2}{3}mT\cdot {(k-2)} = \frac{2}{3}mT\cdot \left(\frac{m}{3} -2 \right)>\frac{m}{3}T = \sum_{i=1}^m b_i\). 
 
     Similarly, if good \(d_1\) is allocated to some agent other than agent \(0\), then agent \(0\) must get at least two goods from \(\{g_1,\ldots,g_m\}\) to satisfy \(v_0(X_0)\ge \frac{1}{k}\sum_{i=1}^k v_i(X_i)\). Note that \(d_2\) along with one of the goods from \(\{g_1,\ldots,g_m\}\) gives a utility of \((m-1)T < \frac{1}{k}\sum_{i=1}^k v_i(X_i) = mT\). Using the same argument as above, we can show that this is not possible.
     
     Therefore, agent \(0\) must receive both \(d_1\) and \(d_2\) in any \EQ{1} allocation where \(v_0(X_0)\ge \frac{1}{k}\sum_{i=1}^k v_i(X_i)\). This means that each agent \(i \in [k]\) must receive a utility of at least \(T\) from the goods in \(\{g_1, \ldots, g_m\}\). This means that the integers in \(S_i\) can be partitioned into \(k\) subsets, each with sum \(T\). Therefore, the instance of \textsc{3-partition} is a `yes' instance.
 \end{proof}

\subsection*{A Pseudopolynomial Time Algorithm}

Given an instance $\langle N, M, \mathcal{V} \rangle$, let $v_{\max} := \max_{i,g} v_i(g)$ be the maximum value for any good. We now show that if the number of agents is fixed, we can in pseudopolynomial time determine if a BoBW allocation exists, and if so, find it. Note that this shows that the weak NP-hardness shown in~\Cref{thm:bobw-weak} for three agents is in fact tight.

\begin{theorem}
    Given an instance  $\langle N, M, \mathcal{V} \rangle$, we can determine existence of an \EQ{} $+$ \EQ{1} allocation in time $\poly\left((m v_{\max})^n\right)$.
    \label{thm:pseudopoly}
\end{theorem}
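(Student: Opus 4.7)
By the characterization in \Cref{thm:any_convex_comb}, an \EQ{} + \EQ{1} allocation exists iff the point $(u,u,\ldots,u)$ lies in $\conv\bigl(\vec{v}(\eqset)\bigr)$ for some $u \in \mathbb{R}$. My plan is to enumerate $\vec{v}(\eqset)$ explicitly via dynamic programming, and then test this convex-hull condition with a polynomial-sized linear program.

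A key observation is that an allocation $A$ is \EQ{1} iff for every ordered pair $i,j\in N$ we have $v_i(A_i) \ge v_j(A_j) - \max_{g\in A_j} v_j(g)$, where the maximum over an empty bundle is taken to be $0$ (in which case the inequality is vacuous since $v_j(A_j)=0$). Thus, to recognise \EQ{1} allocations it suffices for a DP state to record, for each agent $j$, the pair $(s_j, m_j)$, where $s_j$ is the value that agent $j$ has accumulated so far and $m_j$ is the maximum single-item value that $j$ has so far received. Define a DP over the items: a state after processing items $1,\ldots,k$ is $\sigma = ((s_1,m_1),\ldots,(s_n,m_n))$ with initial state $((0,0),\ldots,(0,0))$, and assigning item $k+1$ to agent $j$ transitions $\sigma$ to the state obtained by replacing $(s_j,m_j)$ with $(s_j+v_j(k+1),\,\max(m_j,v_j(k+1)))$. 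Alongside each reachable state I store a predecessor pointer so as to reconstruct a witnessing integral allocation. Since each $s_j\in\{0,1,\ldots,mv_{\max}\}$ and each $m_j\in\{0,1,\ldots,v_{\max}\}$, the number of reachable states is at most $(mv_{\max}+1)^n(v_{\max}+1)^n = O\bigl((mv_{\max}^2)^n\bigr)$, which is $\poly((mv_{\max})^n)$. Each of the $m$ layers does $O(n)$ work per state, so the total running time is $\poly((mv_{\max})^n)$.

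After processing all items, I select those terminal states $\sigma$ satisfying the \EQ{1} condition $s_i \ge s_j - m_j$ for all $i,j\in N$, and project them onto their $(s_1,\ldots,s_n)$ coordinates to obtain $V\subseteq\mathbb{Z}^n$. By construction, $V = \vec{v}(\eqset)$ and $|V| = O((mv_{\max})^n)$. I then set up the LP, with variables $\{p_v\}_{v\in V}$ and $u$, given by $\sum_{v\in V} p_v = 1$, $p_v \ge 0$ for all $v\in V$, and $\sum_{v\in V} p_v \cdot v_i = u$ for every $i\in N$. This LP has $O((mv_{\max})^n)$ variables and $n+1$ equality constraints, so feasibility is decidable in $\poly((mv_{\max})^n)$ time. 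If feasible, combining the LP solution with the recorded witness allocations yields the desired \EQ{} + \EQ{1} randomized allocation; otherwise Theorem~\ref{thm:any_convex_comb} certifies that none exists.

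The only nontrivial design choice is the DP state: because \EQ{1} depends on $\max_{g\in A_j} v_j(g)$ in addition to $v_j(A_j)$, the state must couple each partial sum with its current per-agent maximum, inflating the state count by a factor $v_{\max}^n$ but still staying within $\poly((mv_{\max})^n)$. I expect this sum-plus-maximum coupling to be the main conceptual hurdle; beyond it, everything is bookkeeping and a single LP call. The same skeleton adapts to \EQ{} + \EQX{} (the state tracks $\min_{g\in A_j} v_j(g)$ and the terminal check becomes $s_i \ge s_j - \min_{g\in A_j} v_j(g)$) and to $i$-biased \EQ{1} or \EQX{} (restrict to terminal states with $s_i \ge s_j$ for all $j$ and skip the LP, reporting any witness).
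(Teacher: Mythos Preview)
Your proposal is correct and follows essentially the same approach as the paper: a dynamic program over items whose state records, for each agent, the accumulated value and the maximum single-item value received so far, followed by filtering terminal states for the \EQ{1} condition and then solving a polynomial-sized LP over the resulting value profiles. The paper presents the DP table explicitly (for $n=2$, generalized), with the same state space and runtime bound, and makes the same remarks about adapting to \EQX{} and to $i$-biased allocations.
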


As before, let $\eqset := \{A : A \text{ is an \EQ{1} allocation}\}$ be the set of all possible \EQ{1} allocations. Define $\vec{v}(A) := (v_1(A_1), \ldots, v_n(A_n))$ as the vector of agent values, and $\vec{v}(\eqset) := \{\vec{v}(A): A \in \eqset\}$.
The theorem follows easily by observing that for $n$ agents, the number of distinct value profiles $(v_i(A_i))_{i \in [n]}$ over \emph{all} allocations is at most $(m v_{\max})^n$, since each agent's value lies between $0$ and $m v_{\max}$. Then $|\vec{v}(\eqset)| \le (m v_{\max})^n$. Given the set $\vec{v}(\eqset)$, for~\Cref{thm:pseudopoly}, we only need to determine if there exists a convex combination of the vectors in $\vec{v}(\eqset)$ where each entry in the resulting vector is equal. Clearly, such a convex combination exists iff there exists an \EQ{} + \EQ{1} allocation (the convex coefficients give us the probability distribution for the randomized \EQ{} allocation).

We thus need only to obtain the set $\vec{v}(\eqset)$, for which we next give a dynamic program. To simplify the notation, we describe the program for the case $n=2$. This is easily generalised for more agents.

We maintain a table $\mcT \in m \times [m v_{\max}]^2 \times [v_{\max}]^2$ where an entry $\mcT(t,w_1,w_2,h_1,h_2) = 1$ if there exists an allocation $A = (A_1, A_2)$ of the first $t$ items so that $v_1(A_1) = w_1$, $v_2(A_2) = w_2$, $\max_{g \in A_1} v_1(g) = h_1$, and $\max_{g \in A_2} v_2(g) = h_2$. Note that $A$ is \EQ{1} allocation iff $w_i - h_i \le w_{3-i}$ for $i \in \{1,2\}$.

Then for the base case, $\mcT(0,0,0,0,0) = 1$. Inductively, $\mcT(t,w_1,w_2,h_1,h_2) = 1$ if either:

{\setlength{\leftmargini}{5em}
\begin{enumerate}
    \item[(C1a)] \hypertarget{cond:c1a}{} $h_1 = v_1(g_t)$, and for some $h' \le h_1$, $\mcT(t-1,w_1-v_1(g_t),w_2,h',h_2) = 1$, or
    \item[(C1b)] \hypertarget{cond:c1b}{} $h_1 > v_1(g_t)$, and $\mcT(t-1,w_1-v_1(g_t),w_2,h_1,h_2) = 1$, or
    \item[(C2a)] \hypertarget{cond:c2a}{} $h_2 = v_2(g_t)$, and for some $h' \le h_2$, $\mcT(t-1,w_1,w_2-v_2(g_t),h_1,h') = 1$, or
    \item[(C2b)] \hypertarget{cond:c2b}{} $h_2 > v_2(g_t)$, and $\mcT(t-1,w_1,w_2-v_2(g_t),h_1,h_2) = 1$.
\end{enumerate}}

We first show that the table is filled in correctly.

\begin{restatable}{claim}{pseudopolytable}
  The table entry $\mcT(t,w_1,w_2,h_1,h_2) = 1$ iff there exists an allocation $A = (A_1, A_2)$ of the first $t$ items so that $v_i(A_i) = w_i$ and $\max_{g \in A_i} v_i(g) = h_i$ for $i \in \{1,2\}$.
  \label{clm:pseudopolytable}
\end{restatable}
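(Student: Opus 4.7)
The plan is induction on $t$, with the convention $\max_{g \in \emptyset} v_i(g) = 0$ so that the base case $\mcT(0,0,0,0,0) = 1$ correctly represents the unique (empty) allocation of zero items. Assuming the claim for $t-1$, we verify both directions of the equivalence for $t$.

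For the forward direction, suppose $\mcT(t,w_1,w_2,h_1,h_2)=1$, so one of the four recurrence conditions holds. If \hyperlink{cond:c1a}{(C1a)} triggers, pick the witnessing $h' \le h_1 = v_1(g_t)$ and apply the inductive hypothesis to obtain an allocation $A'$ of $\{g_1,\ldots,g_{t-1}\}$ with $v_1(A'_1)=w_1-v_1(g_t)$, $v_2(A'_2)=w_2$, maximum values $h'$ and $h_2$; extending by $A_1 = A'_1 \cup \{g_t\}$, $A_2 = A'_2$ yields $v_1(A_1)=w_1$ and $\max_{g\in A_1}v_1(g) = \max(h', v_1(g_t)) = v_1(g_t) = h_1$, as required. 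If \hyperlink{cond:c1b}{(C1b)} triggers, the same extension works but now the max stays at $h_1$ because $h_1 > v_1(g_t)$. Cases \hyperlink{cond:c2a}{(C2a)} and \hyperlink{cond:c2b}{(C2b)} are symmetric, placing $g_t$ in $A_2$ instead.

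For the reverse direction, fix an allocation $A=(A_1,A_2)$ of the first $t$ items with the stated statistics. Item $g_t$ lies in either $A_1$ or $A_2$; assume without loss of generality $g_t \in A_1$ (the other case is symmetric and leads to \hyperlink{cond:c2a}{(C2a)}/\hyperlink{cond:c2b}{(C2b)}). Set $A' = (A_1\setminus\{g_t\}, A_2)$ and let $h' = \max_{g \in A_1 \setminus \{g_t\}} v_1(g)$ (understood as $0$ if $A_1 = \{g_t\}$). The inductive hypothesis gives $\mcT(t-1, w_1 - v_1(g_t), w_2, h', h_2) = 1$. Since $h_1 = \max_{g \in A_1} v_1(g) = \max(h', v_1(g_t))$, there are two subcases: if $v_1(g_t) \ge h'$, then $h_1 = v_1(g_t)$ and $h' \le h_1$, so \hyperlink{cond:c1a}{(C1a)} fires and sets $\mcT(t,w_1,w_2,h_1,h_2)=1$; if $v_1(g_t) < h'$, then $h_1 = h' > v_1(g_t)$ and $\mcT(t-1, w_1 - v_1(g_t), w_2, h_1, h_2) = 1$, triggering \hyperlink{cond:c1b}{(C1b)}.

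The only mildly delicate point is the bookkeeping of $h_i$ when a bundle is empty, which is handled uniformly by the $\max_{g \in \emptyset} v_i(g) = 0$ convention; otherwise the induction is a straightforward unpacking of the four cases in the recurrence.
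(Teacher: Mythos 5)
Your proof is correct and follows essentially the same induction as the paper's: peel off $g_t$, track the new maximum via $h_1 = \max(h', v_1(g_t))$, and match the two subcases to (C1a)/(C1b) (resp.\ (C2a)/(C2b)). Your explicit handling of the empty-bundle convention $\max_{g\in\emptyset} v_i(g)=0$ is a small point the paper leaves implicit, and you correctly assign the subcase $v_1(g_t)\ge h'$ to (C1a), where the paper's text has a typo citing (C1b) twice.
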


\begin{proof}
    The proof is by induction. Assume the claim holds for all entries $\mcT(t-1, \cdot,\cdot,\cdot,\cdot)$. Let $A$ be an allocation of the first $t$ items, and let $w_i = v_i(A_i)$, and $h_i = \max_{g \in A_i} v_i(g)$ for $i \in \{1,2\}$. We will first show that $\mcT(w_1, w_2, h_1, h_2) = 1$. Suppose $g_t \in A_1$. Let $A_1' = A_1 \setminus \{g_t\}$, and $h' = \max_{g \in A_i'} v_i(g)$. Then $\mcT(t-1,w_1-v_1(g_t),w_2,h',h_2) = 1$. If $h' > v_i(g_t)$, then clearly $h' = h_1$ (since $h'$ continues to be the largest value of an item in $A_1$), and condition \hyperlink{cond:c1b}{(C1b)} is satisfied. If $h' \le v_i(g_t)$, then $h_1 = v_i(g_t)$, and condition \hyperlink{cond:c1b}{(C1b)} is satisfied. In either case, $\mcT(w_1, w_2, h_1, h_2) = 1$. The proof if $g_1 \in A_2$ is analogous.

    Now suppose $\mcT(w_1, w_2, h_1, h_2) = 1$. We need to show there exists such an allocation $A$ of the first $t$ items. Suppose condition \hyperlink{cond:c1a}{(C1a)} is satisfied, and let $A'$ be the allocation for $\mcT(t-1,w_1-v_1(g_t),w_2,h',h_2)$. Then let $A_1 = A_1' \cup \{g_t\}$. Since $h' \le h_1 = v_1(g_t)$, now $g_t$ is a maximum value item in $A_1$, and for the allocation $A$, we get $w_i = v_i(A_i)$ and $h_i = \max{g \in A_i} v_i(g)$ for $i \in \{1,2\}$. If condition \hyperlink{cond:c1b}{(C1b)} is satisfied, let $A'$ be the allocation for $\mcT(t-1,w_1-v_1(g_t),w_2,h_1,h_2)$. Letting $A_1 = A_1' \cup \{g_t\}$, since $h_1 = \max_{g \in A_1'} v_1(g) > v_1(g_t)$, we can see the table entry $\mcT(w_1, w_2, h_1, h_2)$ is verified by the allocation $A$. A similar proof holds if condition \hyperlink{cond:c2a}{(C2a)} or \hyperlink{cond:c2b}{(C2b)} is satisfied.    
\end{proof}

We can now prove~\Cref{thm:pseudopoly}.

\begin{proof}[Proof of~\Cref{thm:pseudopoly}.] Given values $(w_1, \ldots, w_n)$, by~\Cref{clm:pseudopolytable} (suitably extended to $n$ agents), there exists an \EQ{1} allocation $A$ with $v_i(A_i) = w_i$ for $i \in [n]$ iff there exist $h_i$s in $[0,v_{\max}]$ so that $\mcT(m,w_1,\ldots, w_n,h_1,\ldots,h_n)=1$, and for all $i,j \in [n]$, $w_i-h_i \le w_j$. Further, the table $\mcT$ has size $m \times (m v_{\max})^n \times (v_{\max})^n$. Hence, we can in time $\poly\left(m v_{\max})^n\right)$ obtain the set $\vec{v}(\eqset)$ (e.g., by checking each entry of $\mcT$ to see if it satisfies $w_i-h_i \le w_j$ for all $i,j \in [n]$). To decide the existence of an  \EQ{} + \EQ{1} allocation, we only need to determine if there exists a convex combination of the vectors in $\vec{v}(\eqset)$ where each entry in the resulting vector is equal. If such a convex combination exists, then clearly there exists \EQ{} + \EQ{1} allocation, and if not, such an allocation does not exist. This can be determined by, e.g., solving a linear program in $|\vec{v}(\eqset)|$ variables and $n + |\vec{v}(\eqset)|$ constraints, which can be done in time $\poly\left((mv_{\max})^n\right)$.
\end{proof}

\begin{remark}
    The pseudopolynomial time algorithm is easily modified to determine the existence \EQ{} + \EQX{} allocations (by letting $h_i$ be the \emph{minimum} value of a good in $A_i$, in table $\mcT$) and to determine the existence of $i$-biased \EQ{1} allocations (by checking each entry of $\vec{v}(\eqset)$, from the proof of~\Cref{thm:pseudopoly}, to see if agent $i$ has highest value).
\end{remark}

Lastly, we show that if there are more than two agents, then an $i$-biased \EQ{1} allocation may not exist, and it is NP-hard to determine if an $i$-biased \EQ{1} allocation exists. Recall that for two agents, an $i$-biased \EQ{1} allocation always exists for $i \in \{1,2\}$, and an $i$-biased \EQX{} allocation may not exist.

\begin{restatable}{theorem}{biasedhardness}
\label{thm:reduction}
    Given an instance with $3$ agents and normalised valuations, an $i$-biased \EQ{1} allocation may not exist. Deciding the existence of such an allocation is NP-hard.
\end{restatable}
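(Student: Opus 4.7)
The plan is to establish both claims using constructions that already appear in the paper, namely the non-existence instance from \Cref{thm:bobw_n_3} and the 2-Partition reduction from \Cref{thm:bobw-weak}. The key observation driving both is that each of those proofs has already established the stronger statement that in every \EQ{1} allocation $A$ of the relevant instance, $v_1(A_1) < \tfrac{1}{2}\left(v_2(A_2)+v_3(A_3)\right)$, which is precisely what is needed to rule out agent $1$ being the rich agent: if agent $1$ were rich, then $v_1(A_1) \ge v_2(A_2)$ and $v_1(A_1) \ge v_3(A_3)$ would yield $2v_1(A_1) \ge v_2(A_2)+v_3(A_3)$, a contradiction. Applying this observation to the 4-item instance from \Cref{thm:bobw_n_3} immediately gives an instance where no $1$-biased \EQ{1} allocation exists, proving the first half of the theorem.

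For NP-hardness, I will reduce from 2-Partition using exactly the instance in the proof of \Cref{thm:bobw-weak}, and show that it admits a $1$-biased \EQ{1} allocation if and only if the 2-Partition input is a yes-instance. Membership in NP is immediate since an allocation is a polynomial-sized certificate whose \EQ{1} and bias properties can be checked directly from the valuations. For the forward direction, given a partition $S_1, S_2$ with $\sum_{i \in S_1} b_i = \sum_{i \in S_2} b_i = T$, the integral allocation $A^1$ already exhibited in the proof of \Cref{thm:bobw-weak} (agent $1$ receives $\{d_1, d_2\}$, agent $2$ receives $\{g_i\}_{i \in S_1}$, agent $3$ receives $\{g_i\}_{i \in S_2}$) satisfies $v_1(A_1) = 5T$ and $v_2(A_2) = v_3(A_3) = T$; it is \EQ{1} by the argument given there, and clearly $v_1(A_1) \ge v_2(A_2), v_3(A_3)$, so it is a $1$-biased \EQ{1} allocation.

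For the reverse direction, if the 2-Partition input is a no-instance, \Cref{claim:no_instance_partition} (already proved within \Cref{thm:bobw-weak}) asserts that every \EQ{1} allocation of the reduced instance satisfies $v_1(A_1) < \tfrac{1}{2}\left(v_2(A_2)+v_3(A_3)\right)$, which by the opening observation prevents agent $1$ from being rich; hence no $1$-biased \EQ{1} allocation exists. I do not expect a substantive obstacle: the conceptual content is that the separating direction $\lambda = (1, -\tfrac{1}{2}, -\tfrac{1}{2})$ employed implicitly in both prior proofs is simultaneously an obstruction to ex ante \EQ{} with ex post \EQ{1} support and to the existence of a $1$-biased \EQ{1} allocation, so the earlier case analyses transfer without modification.
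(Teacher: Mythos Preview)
Your proposal is correct. Both the non-existence and the hardness parts follow validly from the results you cite: the inequality $v_1(A_1) < \tfrac{1}{2}(v_2(A_2)+v_3(A_3))$ established for all \EQ{1} allocations in \Cref{thm:bobw_n_3} and in \Cref{claim:no_instance_partition} does immediately preclude agent~$1$ from being rich, and the allocation $A^1$ from the forward direction of \Cref{thm:bobw-weak} is indeed a $1$-biased \EQ{1} allocation.

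The paper, however, takes a different route: it constructs fresh, smaller instances tailored to the biased-\EQ{1} question rather than reusing the BoBW constructions. For non-existence it uses a $3$-agent, $3$-item example and argues by direct enumeration; for hardness it gives a reduction with only $m{+}1$ items (one dummy $d$ instead of $d_1,d_2$) and a short self-contained argument that does not depend on the lengthy case analysis of \Cref{claim:no_instance_partition}. Your approach is more economical in exposition (no new instance, no new case analysis) and has the conceptual payoff of making explicit that the same separating direction $\lambda=(1,-\tfrac12,-\tfrac12)$ simultaneously obstructs BoBW allocations and $1$-biased \EQ{1} allocations. The paper's approach, by contrast, yields tighter witnessing instances and is logically independent of the earlier BoBW proofs.
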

\begin{proof}
    Consider the following instance $\mathcal{I}$ in \Cref{fig:nobiasedEQ1} with $3$ agents and $3$ items. We claim that there is no \EQ{1} allocation $A$ such that $v_1(A_1) \geq v_i(A_i)$ for $i \in \{2, 3\}$. That is, there is no $1$-Biased \EQ{1} allocation.

    \begin{figure}[H]    
    \centering
    \begin{tabular}{c|ccc}
          & $g_1$ & $g_2$ & $g_3$ \\ \hline
    $1$ & 9     & 6     & 6     \\
    $2$ & 1     & 10    & 10    \\
    $3$ & 7     & 7     & 7    
    \end{tabular}
    \caption{An instance with normalised valuations where there is no $1$-Biased \EQ{1} allocation.}
    \label{fig:nobiasedEQ1}
    \end{figure}

    Note that since $m=n$, under any \EQ{1} allocation, every agent must get exactly one item. If agent 1 gets $g_1$ then regardless of what agent $2$ gets, $v_1(A_1) = 9 < 10 = v_2(A_2)$. Alternatively, if agent 1 gets $g_2$ or $g_3$, then $v_1(A_1) = 6 < 7 = v_3(A_3)$. Therefore, there is no \EQ{1} allocation wherein agent 1 is a rich agent.

    To show the hardness of deciding the existence of $i$-biased \EQ{1} allocation, we exhibit a reduction from \textsc{2-partition}, where given a multiset $S = \{b_1, b_2, \ldots b_m\}$ of positive integers with sum $2T$, the task is to decide if there is a partition of $S$ into two subsets $S_1$ and $S_2$ such that the sum of the numbers in both the partitions equals $T$. That is, $S = S_1 \cup S_2$ and $\sum_{b \in S_1} b = \sum_{b' \in S_2}b' = T$. Given an instance of \textsc{2-partition}, we construct a fair division instance as follows. We create $3$ agents, $m$ items $\{g_1, \ldots g_m\}$ that we call \emph{partition items}, and a dummy item $d$. The first agent values all the $m+1$ items at $T$, while the remaining two (identical) agents value each partition item $g_i$ at $s_i$ and the dummy item at $(m-1)T$. The valuations are also depicted in \Cref{fig:table_reduction}. Note that the valuations are normalised, since every agent values the grand bundle at $(m+1)T$.

    \begin{figure}[H]
    \centering
    \begin{tabular}{c|ccccc}
          & $g_1$ & $g_2$     & $\ldots$ & $g_m$ & $d$ \\  \hline
    $1$ & $T$   & $T$     & $\ldots$ &  $T$ & $T$ \\
    $2$ & $b_1$  & $b_2$   & $\ldots$ &  $b_m$ & $(m-1)T$    \\
    $3$ & $b_1$  & $b_2$  & $\ldots$ &  $b_m$ & $(m-1)T$     
    \end{tabular}
    \caption{\centering Reduced instance as in the proof of \Cref{thm:reduction}.}
    \label{fig:table_reduction}
    \end{figure}

    This completes the construction. We now argue the equivalence of the reduction.

    \noindent \textbf{Forward Direction.} Suppose the instance of \textsc{2-partition} is a `yes' instance and say $S_1$ and $S_2$ are the two partitions of $S$ such that $\sum_{b \in S_1} b = \sum_{b' \in S_2}b' = T$. Then consider the allocation $A = \{A_1, A_2, A_3\}$ such that $A_1 = \{d\}$, $A_2 = \{g: v_2(g) \in S_1\}$ and $A_3 = \{g: v_3(g) \in S_2\}$. Then, we have $v_i(A_i) = T ~\forall~ i \in [3]$. Therefore, $A$ is an \EQ{} allocation, and no agent has more utility than that of agent $1$. Hence, the allocation instance is also a `yes' instance.

    \vspace{0.2cm}

    \noindent \textbf{Reverse Direction.} Suppose the reduced instance admits a $1$-biased \EQ{1} allocation $A$ such that $v_1(A_1) \geq v_i(A_i)$ for $i \in \{2, 3\}$. Then, $d$ cannot be assigned to either agent 2 or 3. Thus $d \in A_1$ and $v_1(A_1) \geq T.$ If agent 1  also gets any of the partition items, say $g$, in addition to $d$, then $v_1(A_1) \geq 2T$. Since $A$ is \EQ{1}, it must be the case that $v_2(A_2) \geq T$ and $v_3(A_3) \geq T$. Then, $v_2(A_2) + v_3(A_3) = v_2(A_2 \cup A_3) = v_2(S \setminus g) \geq 2T$. But this is a contradiction since $v_2(S \setminus g) < v_2(S) = 2T$. Therefore, we have $A_1 = \{d\}$. Also, since $v_1(A_1) \geq v_i(A_i)$ for $i \in \{2, 3\}$, the utility of agents 2 and 3 is at most $T$. Since $A$ is a complete allocation, we have $v_2(A_2) = v_3(A_3) = T$. This corresponds to a partition of $S$ into $S_1 = \{v_2(g): g \in A_2\}$ and $S_2 = \{v_3(g): g \in A_3\}$. Hence, we have $\sum_{b \in S_1} b = \sum_{b' \in S_2}b' = T$, and hence the partition instance is also a `yes' instance.

    The theorem stands proved.
\end{proof}



\section{Conclusion}

Our work gives a geometric characterization for the BoBW equitability, and presents a complete landscape of the existence and computation of such allocations, both for \EQ{} + \EQ{1} and \EQ{} + \EQX{}. We also study $i$-biased \EQ{1} allocations, which can be of independent interest for other fairness notions, as well as an independent measure of fairness. An obvious open question regards BoBW allocations with approximate equitability. Concretely, can we show existence of ex ante $\alpha$-EQ and ex post \EQ{1} (or \EQX{}) allocations, for $\alpha < 1$? An ex ante $\alpha$-EQ allocation would require that in expectation, $\alpha v_i(X_i) \le v_j(X_j) \le \frac{1}{\alpha} v_i(X_i)$ for all agents $i$, $j$. It would also be interesting to study if our results for binary valuations extend beyond additive valuations, e.g., for matroid rank valuations. Finally, given the nonexistence of BoBW allocations for general instances, it would be useful to study other restricted domains where such allocations do exist.

\subsection*{Acknowledgments}

This research was supported in part by the Department of Atomic Energy, Government of India, under Project No. RTI4001. Vishwa Prakash HV acknowledges the support from TCS Research Fellowship. Aditi Sethia acknowledges the support from the Walmart Center for Tech Excellence (CSR WMGT-23-
0001) 


\bibliographystyle{alphaurl}
\bibliography{References}

\begin{thebibliography}{}

\end{thebibliography}


\newcommand{\etalchar}[1]{$^{#1}$}
\begin{thebibliography}{LMNW18}

\bibitem[AFSV23]{Aziz2023BestOB}
Haris Aziz, Rupert Freeman, Nisarg Shah, and Rohit Vaish.
\newblock Best of both worlds: Ex ante and ex post fairness in resource allocation.
\newblock {\em Oper. Res.}, 72:1674--1688, 2023.

\bibitem[Azi21]{aziz21}
Haris Aziz.
\newblock Achieving envy-freeness and equitability with monetary transfers.
\newblock In {\em Thirty-Fifth {AAAI} Conference on Artificial Intelligence, {AAAI} 2021}, pages 5102--5109. {AAAI} Press, 2021.

\bibitem[BBPP24]{barman2024nearly}
Siddharth Barman, Umang Bhaskar, Yeshwant Pandit, and Soumyajit Pyne.
\newblock Nearly equitable allocations beyond additivity and monotonicity.
\newblock In {\em Proceedings of the AAAI Conference on Artificial Intelligence}, volume~38, pages 9494--9501, 2024.

\bibitem[BCE{\etalchar{+}}16]{10.5555/3033138}
Felix Brandt, Vincent Conitzer, Ulle Endriss, J\'{e}r\^{o}me Lang, and Ariel~D. Procaccia.
\newblock {\em Handbook of Computational Social Choice}.
\newblock Cambridge University Press, USA, 1st edition, 2016.

\bibitem[BCKM13]{10.1257/aer.103.2.585}
Eric Budish, Yeon-Koo Che, Fuhito Kojima, and Paul Milgrom.
\newblock Designing random allocation mechanisms: Theory and applications.
\newblock {\em American Economic Review}, 103(2):585–623, April 2013.

\bibitem[BCKO17]{10.1287/opre.2016.1544}
Eric Budish, G\'{e}rard~P. Cachon, Judd~B. Kessler, and Abraham Othman.
\newblock Course match: A large-scale implementation of approximate competitive equilibrium from equal incomes for combinatorial allocation.
\newblock {\em Oper. Res.}, 65(2):314–336, April 2017.

\bibitem[BEF22]{10.1007/978-3-031-22832-2_14}
Moshe Babaioff, Tomer Ezra, and Uriel Feige.
\newblock On best-of-both-worlds fair-share allocations.
\newblock In {\em Web and Internet Economics: 18th International Conference, WINE 2022, Troy, NY, USA, December 12–15, 2022, Proceedings}, page 237–255, Berlin, Heidelberg, 2022. Springer-Verlag.

\bibitem[Bir46]{1573387450959988992}
G.~Birkhoff.
\newblock Three observations on linear algebra.
\newblock {\em Univ. Nac. Tacuman, Rev. Ser. A}, 5:147--151, 1946.

\bibitem[BM01]{BOGOMOLNAIA2001295}
Anna Bogomolnaia and Hervé Moulin.
\newblock A new solution to the random assignment problem.
\newblock {\em Journal of Economic Theory}, 100(2):295--328, 2001.

\bibitem[BMSV23]{10.1007/978-3-031-43254-5_16}
Umang Bhaskar, Neeldhara Misra, Aditi Sethia, and Rohit Vaish.
\newblock The price of equity with binary valuations and few agent types.
\newblock In {\em Algorithmic Game Theory: 16th International Symposium, SAGT 2023, Egham, UK, September 4–7, 2023, Proceedings}, page 271–289, Berlin, Heidelberg, 2023. Springer-Verlag.

\bibitem[BT96]{brams1996fair}
Steven~J Brams and Alan~D Taylor.
\newblock {\em Fair Division: From cake-cutting to dispute resolution}.
\newblock Cambridge University Press, 1996.

\bibitem[Bud11]{B11combinatorial}
Eric Budish.
\newblock {The Combinatorial Assignment Problem: Approximate Competitive Equilibrium from Equal Incomes}.
\newblock {\em Journal of Political Economy}, 119(6):1061--1103, 2011.

\bibitem[DS61]{DS61cut}
Lester~E Dubins and Edwin~H Spanier.
\newblock {How to Cut a Cake Fairly}.
\newblock {\em The American Mathematical Monthly}, 68(1P1):1--17, 1961.

\bibitem[FMNP24]{10.1145/3670865.3673592}
Michal Feldman, Simon Mauras, Vishnu~V. Narayan, and Tomasz Ponitka.
\newblock Breaking the envy cycle: Best-of-both-worlds guarantees for subadditive valuations.
\newblock In {\em Proceedings of the 25th ACM Conference on Economics and Computation}, EC '24, New York, USA, 2024.

\bibitem[Fol67]{F67resource}
Duncan Foley.
\newblock {Resource Allocation and the Public Sector}.
\newblock {\em Yale Economic Essays}, pages 45--98, 1967.

\bibitem[FSVX19]{FSV+19equitable}
Rupert Freeman, Sujoy Sikdar, Rohit Vaish, and Lirong Xia.
\newblock {Equitable Allocations of Indivisible Goods}.
\newblock In {\em Proceedings of the 28th International Joint Conference on Artificial Intelligence}, pages 280--286, 2019.

\bibitem[FSVX20]{10.5555/3398761.3398810}
Rupert Freeman, Sujoy Sikdar, Rohit Vaish, and Lirong Xia.
\newblock Equitable allocations of indivisible chores.
\newblock In {\em Proceedings of the 19th International Conference on Autonomous Agents and MultiAgent Systems}, AAMAS '20, page 384–392, Richland, SC, 2020.

\bibitem[GJ09]{gareyjohnsonComputersintractability09}
Michael~R. Garey and David~S. Johnson.
\newblock {\em Computers and Intractability: A Guide to the Theory of {{NP-completeness}}}.
\newblock A Series of Books in the Mathematical Sciences. W. H. Freeman \& Co., New York, 2009.

\bibitem[GMPZ17]{10.1145/3131361}
Ya’akov~(Kobi) Gal, Moshe Mash, Ariel~D. Procaccia, and Yair Zick.
\newblock Which is the fairest (rent division) of them all?
\newblock {\em J. ACM}, 64(6), November 2017.

\bibitem[GMT14]{GMT14near}
Laurent Gourv{\`e}s, J{\'e}r{\^o}me Monnot, and Lydia Tlilane.
\newblock {Near Fairness in Matroids}.
\newblock In {\em Proceedings of the 21st European Conference on Artificial Intelligence}, pages 393--398, 2014.

\bibitem[GP15]{10.1145/2728732.2728738}
Jonathan Goldman and Ariel~D. Procaccia.
\newblock Spliddit: unleashing fair division algorithms.
\newblock {\em SIGecom Exch.}, 13(2):41–46, January 2015.

\bibitem[HP09]{Herreiner2009}
Dorothea~K. Herreiner and Clemens~D. Puppe.
\newblock Envy freeness in experimental fair division problems.
\newblock {\em Theory and Decision}, 67(1):65--100, Jul 2009.

\bibitem[HP10]{HERREINER2010238}
Dorothea~K. Herreiner and Clemens Puppe.
\newblock Inequality aversion and efficiency with ordinal and cardinal social preferences—an experimental study.
\newblock {\em Journal of Economic Behavior and Organization}, 76(2):238--253, 2010.

\bibitem[HS25]{hosseini2025equitableallocationsmixturesgoods}
Hadi Hosseini and Aditi Sethia.
\newblock Equitable allocations of mixtures of goods and chores, 2025.
\newblock \href {https://arxiv.org/abs/2501.06799} {\path{arXiv:2501.06799}}.

\bibitem[HSV23]{10.5555/3545946.3598686}
Martin Hoefer, Marco Schmalhofer, and Giovanna Varricchio.
\newblock Best of both worlds: Agents with entitlements.
\newblock In {\em Proceedings of the 2023 International Conference on Autonomous Agents and Multiagent Systems}, AAMAS '23, page 564–572, Richland, SC, 2023. International Foundation for Autonomous Agents and Multiagent Systems.

\bibitem[KMS21]{Kamiyama2021OnTC}
Naoyuki Kamiyama, Pasin Manurangsi, and Warut Suksompong.
\newblock On the complexity of fair house allocation.
\newblock {\em Oper. Res. Lett.}, 49:572--577, 2021.

\bibitem[LMMS04]{LMM+04approximately}
Richard~J Lipton, Evangelos Markakis, Elchanan Mossel, and Amin Saberi.
\newblock {On Approximately Fair Allocations of Indivisible Goods}.
\newblock In {\em Proceedings of the 5th ACM Conference on Electronic Commerce}, pages 125--131, 2004.

\bibitem[LMNW18]{Lian_Mattei_Noble_Walsh_2018}
Jing~Wu Lian, Nicholas Mattei, Renee Noble, and Toby Walsh.
\newblock The conference paper assignment problem: Using order weighted averages to assign indivisible goods.
\newblock {\em Proceedings of the AAAI Conference on Artificial Intelligence}, 32(1), Apr. 2018.

\bibitem[MMS23]{10.5555/3545946.3599039}
Jayakrishnan Madathil, Neeldhara Misra, and Aditi Sethia.
\newblock The complexity of minimizing envy in house allocation.
\newblock In {\em Proceedings of the 2023 International Conference on Autonomous Agents and Multiagent Systems}, AAMAS '23, page 2673–2675, Richland, SC, 2023.

\bibitem[Mou04]{M04fair}
Herv{\'e} Moulin.
\newblock {\em {Fair Division and Collective Welfare}}.
\newblock MIT press, 2004.

\bibitem[SC25]{SunC25}
Ankang Sun and Bo~Chen.
\newblock Randomized strategyproof mechanisms with best of both worlds fairness and efficiency.
\newblock {\em European Journal of Operational Research}, 2025.

\bibitem[SCD23]{SCD23equitability}
Ankang Sun, Bo~Chen, and Xuan~Vinh Doan.
\newblock {Equitability and Welfare Maximization for Allocating Indivisible Items}.
\newblock {\em Autonomous Agents and Multi-Agent Systems}, 37(8), 2023.

\bibitem[Ste48]{S48division}
Hugo Steinhaus.
\newblock {The Problem of Fair Division}.
\newblock {\em Econometrica}, 16(1):101--104, 1948.

\bibitem[Su99]{Su1999RentalHS}
Francis~Edward Su.
\newblock Rental harmony: Sperner's lemma in fair division.
\newblock {\em American Mathematical Monthly}, 106:930--942, 1999.

\bibitem[Var74]{VARIAN197463}
Hal~R Varian.
\newblock Equity, envy, and efficiency.
\newblock {\em Journal of Economic Theory}, 9(1):63--91, 1974.

\bibitem[vN53]{Neumann+1953+5+12}
John von Neumann.
\newblock {\em A Certain Zero-sum Two-person Game Equivalent to the Optimal Assignment Problem}, pages 5--12.
\newblock Princeton University Press, Princeton, 1953.

\end{thebibliography}
\end{document}